\newtheorem{theorem}{Theorem}[section]
\newtheorem{corollary}{Corollary}
\newtheorem{lemma}[theorem]{Lemma}
\theoremstyle{definition}
\newtheorem{definition}[theorem]{Definition}
\newcommand\bs{\boldsymbol{s}}
\newcommand\bxi{\boldsymbol{\xi}}
\newcommand\bx{\boldsymbol{x}}
\newcommand\be{\boldsymbol{e}}
\newcommand\bu{\boldsymbol{u}}
\newcommand\bw{\boldsymbol{w}}
\newcommand\bC{\boldsymbol{C}}
\newcommand\bQ{\boldsymbol{Q}}
\newcommand\bq{\boldsymbol{q}}
\newcommand\bTheta{\boldsymbol{\Theta}}
\newcommand\bphi{\boldsymbol{\phi}}
\newcommand\dd{\,\mathrm{d}}
\newcommand\bn{\boldsymbol{n}}
\newcommand\rd{\mathrm{d}}
\newcommand\odd[1]{\dfrac{\rd}{\rd {#1}}}
\newcommand\pdd[1]{\dfrac{\partial}{\partial {#1}}}
\newcommand\pd[2]{\dfrac{\partial {#1}}{\partial {#2}}}
\title[HYPERBOLICITY OF 13-MOMENT SYSTEMS]
      {On Hyperbolicity of 13-Moment System}
\author[Zhenning Cai and Yuwei Fan and Ruo Li]{}
\subjclass{82C40, 35L60}
 \keywords{Grad's moment system, hyperbolicity, modified 13-moment system}
 \email{caizn@pku.edu.cn}
 \email{ywfan@pku.edu.cn}
 \email{rli@math.pku.edu.cn}
\thanks{The research is supported in part by the National Basic Research
Program of China (2011CB309704)}
\begin{document}
\maketitle

\centerline{\scshape Zhenning Cai}
\medskip
{\footnotesize
 \centerline{CAPT \& School of Mathematical Sciences}
   \centerline{Peking University}
   \centerline{Yiheyuan Road 5, 100871 Beijing, China}
} 

\medskip

\centerline{\scshape Yuwei Fan}
\medskip
{\footnotesize
 \centerline{School of Mathematical Sciences}
   \centerline{Peking University}
   \centerline{Yiheyuan Road 5, 100871 Beijing, China}
} 

\medskip

\centerline{\scshape Ruo Li}
\medskip
{\footnotesize
 \centerline{CAPT \& School of Mathematical Sciences}
   \centerline{Peking University}
   \centerline{Yiheyuan Road 5, 100871 Beijing, China}
}

\bigskip

 \centerline{(Communicated by Yan Guo)}

\begin{abstract}
  We point out that the thermodynamic equilibrium is not an interior
  point of the hyperbolicity region of Grad's 13-moment system. With a
  compact expansion of the phase density, which is compacter than
  Grad's expansion, we derived a modified 13-moment system. The new
  13-moment system admits the thermodynamic equilibrium as an interior
  point of its hyperbolicity region. We deduce a concise criterion to
  ensure the hyperbolicity, thus the hyperbolicity region can be
  quantitatively depicted.
\end{abstract}

\section{Introduction}
Grad's 13-moment system \cite{Grad} has been studied for over 50
years. This system was derived by utilizing the isotropic Hermite
expansion \cite{Grad1949note} of the phase density in the Boltzmann
equation, and such an idea opened a brand new direction in the gas
kinetic theory. In the subsequent years, a number of defects of this
model were discovered, one of which was that Grad's 13-moment system
is not globally hyperbolic, and for 1D flows, the hyperbolicity can
only be obtained near the equilibrium \cite{Muller}. The loss of
hyperbolicity directly breaks the well-posedness of the partial
differential equations, and thus the capability of this model is
strictly limited. Extensions of this model include the regularized
Burnett equations \cite{Jin}, regularized 13-moment equations
\cite{Struchtrup2003, Struchtrup2005}, and the Pearson-13-moment
equations \cite{Torrilhon2010}. These methods may alleviate the
problem of hyperbolicity to some extent \cite{Slemrod, TorrilhonRGD,
  Torrilhon2010}. However, all the analysis on the hyperbolicity is
restricted to the 1D flows, while there are no comments indicating
that the 3D case is the same or similar as the 1D case. Actually,
Grad's paper \cite{Grad} has pointed out that in comparison with the
1D flows, an additional soundspeed appears in the 2D flows. In this
paper, we are concerned with the hyperbolicity for the full 3D flows.

We first point out that in the 3D Grad's 13-moment equations, for each
equilibrium state, none of its neighbourhoods is contained in the
hyperbolicity region. This reveals that the manifold formed by all the
equilibrium states is on the boundary of the hyperbolicity region, and
thus an arbitrary small perturbation of the equilibrium state may lead
to the loss of hyperbolicity, which indicates the instability of
Grad's 13-moment equations. More precisely, we prove that if an
arbitrary small anisotropic perturbation is applied to the phase
density from the equilibrium, the hyperbolicity may be broken down.
Noticing that the anisotropy plays an essential role in breaking down
the hyperbolicity, we then propose a new modified 13-moment model such
that the equilibrium state lies in the interior of the hyperbolicity
region. This modified system is derived by an anisotropic Hermite
expansion instead of the isotropic Hermite expansion in Grad's method,
where the anisotropy is specified by the temperature tensor. We find
out a dimensionless quantity which can prescribe the departure of the
phase density from the equilibrium state. It is proven that if this
quantity is controlled above by a threshold, the full 3D system is
hyperbolic. The value of this threshold is approximately given, and
the size of the hyperbolicity region is depicted using the
Chapman-Enskog type expansion.

The rest of this paper is as follows: in Section \ref{sec:algebra},
some algebraic lemmas are given as preliminaries; in Section
\ref{sec:Boltzmann_Grad}, the hyperbolicity of Grad's 13-moment
equations is discussed for both 1D and 3D flows; Section
\ref{sec:new_13m} gives a modified 13-moment system, and its
hyperbolicity is discussed in detail; finally, some concluding remarks
are given in \ref{sec:conclusion}.



\section{Some preliminary results in linear algebra} \label{sec:algebra}
In this paper, we will focus on the hyperbolicity of moment systems,
i.e. the real diagonalizability of the coefficient matrices in these
systems. In order to make our later derivation self-consistent, we
present some lemmas about matrices and polynomials as preliminaries
and most of the proofs can be found in textbooks of linear algebra. If
not specified, we are considering real matrices and polynomials with
real coefficients.
\begin{lemma}[Cayley-Hamilton]
  For any square matrix $\bf A$ and its characteristic polynomial
  $p(\lambda) = \mathrm{det}(\lambda {\bf I} - {\bf A})$, we have
  $p({\bf A}) = {\bf 0}$.
\end{lemma}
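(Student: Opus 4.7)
The plan is to prove Cayley--Hamilton via the adjugate (classical adjoint) identity, which is self-contained over $\mathbb{R}$ and avoids any detour through $\mathbb{C}$. First I would invoke the universal formula $\mathbf{M}\,\mathrm{adj}(\mathbf{M}) = \det(\mathbf{M})\,\mathbf{I}$, valid for any matrix $\mathbf{M}$ over a commutative ring, and specialize it to $\mathbf{M} = \lambda\mathbf{I} - \mathbf{A}$ regarded as a matrix over the polynomial ring $\mathbb{R}[\lambda]$. This yields
$$(\lambda\mathbf{I} - \mathbf{A})\,\mathrm{adj}(\lambda\mathbf{I} - \mathbf{A}) \;=\; p(\lambda)\,\mathbf{I}.$$

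Second, I would exploit the fact that every entry of $\mathrm{adj}(\lambda\mathbf{I}-\mathbf{A})$ is a polynomial in $\lambda$ of degree at most $n-1$, so one can write
$$\mathrm{adj}(\lambda\mathbf{I}-\mathbf{A}) = \sum_{k=0}^{n-1} \lambda^{k}\mathbf{B}_{k}, \qquad p(\lambda) = \sum_{k=0}^{n} c_{k}\lambda^{k},$$
with constant matrix coefficients $\mathbf{B}_{k}$ and scalars $c_{k}$ (with $c_{n}=1$). Substituting these expansions into the adjugate identity and equating coefficients of each power of $\lambda$ produces $n+1$ matrix equations linking $\mathbf{B}_{k-1}$, $\mathbf{A}\mathbf{B}_{k}$, and $c_{k}\mathbf{I}$.

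Third, I would right-multiply the relation obtained at order $\lambda^{k}$ by $\mathbf{A}^{k}$ and sum over $k = 0,\dots,n$. The $\mathbf{B}_{k}$ contributions then telescope and cancel, leaving precisely $\sum_{k=0}^{n} c_{k}\mathbf{A}^{k} = p(\mathbf{A}) = \mathbf{0}$. The one delicate point, and the place where such proofs are most easily botched, is that one must \emph{not} naively substitute $\mathbf{A}$ for $\lambda$ in the original polynomial identity: $\lambda$ is a central indeterminate commuting with everything, whereas $\mathbf{A}$ need not commute with the $\mathbf{B}_{k}$. The coefficient-matching and right-telescoping step is what legitimately transfers the scalar identity in $\mathbb{R}[\lambda]$ into the matrix statement $p(\mathbf{A}) = \mathbf{0}$.

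As a fallback I would keep the triangularization argument in reserve: over $\mathbb{C}$ any matrix is similar to an upper triangular $\mathbf{T}$ whose diagonal carries the eigenvalues $\lambda_{1},\dots,\lambda_{n}$, so $p(\lambda) = \prod_{i=1}^{n}(\lambda - \lambda_{i})$. A short induction on $k$ then shows that $\prod_{i=1}^{k}(\mathbf{T} - \lambda_{i}\mathbf{I})$ annihilates the first $k$ standard basis vectors, whence $p(\mathbf{T}) = \mathbf{0}$, and conjugation returns the identity to $\mathbf{A}$. Either route suffices, but the adjugate proof is preferable here because it stays within the real field and avoids invoking spectral structure that is not yet needed.
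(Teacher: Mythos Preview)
Your adjugate proof is correct and standard. However, the paper does not actually prove this lemma at all: it is listed in Section~\ref{sec:algebra} among the preliminary results with the remark that ``most of the proofs can be found in textbooks of linear algebra,'' and no argument is supplied. So there is nothing to compare your approach against --- you have given a valid self-contained proof where the paper simply quotes the result. Your caution about not naively substituting $\mathbf{A}$ for $\lambda$ is well placed, and the coefficient-matching plus telescoping argument is exactly the right way to make the adjugate identity rigorous.
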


\begin{lemma} \label{thm:mp_cp}
For a square matrix $\bf A$, the following three statements are
equivalent:
\begin{enumerate}
\item $\lambda$ is a root of the minimal polynomial of $\bf A$,
\item $\lambda$ is a root of the characteristic polynomial of $\bf A$,
\item $\lambda$ is an eigenvalue of $\bf A$.
\end{enumerate}
\end{lemma}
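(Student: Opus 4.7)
The plan is to establish the equivalence by showing (2) $\Leftrightarrow$ (3) directly from definitions and then (1) $\Leftrightarrow$ (2) using the Cayley--Hamilton lemma together with the polynomial division algorithm. Since the three statements are tightly linked by standard structural facts about matrices, I do not expect a genuine obstacle here; the proof should be short.

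First I would handle (2) $\Leftrightarrow$ (3). By the definition of the characteristic polynomial $p(\lambda) = \mathrm{det}(\lambda {\bf I} - {\bf A})$, the scalar $\lambda$ is a root of $p$ iff $\lambda {\bf I} - {\bf A}$ is singular, iff there exists a nonzero vector ${\bf v}$ with ${\bf A}{\bf v} = \lambda {\bf v}$, iff $\lambda$ is an eigenvalue of ${\bf A}$. This is essentially a restatement of definitions and the characterization of singularity by the determinant.

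Next, for (1) $\Rightarrow$ (2), I would show that the minimal polynomial $m(\lambda)$ of ${\bf A}$ divides the characteristic polynomial $p(\lambda)$. Using the polynomial division algorithm, write $p = qm + r$ with $\deg r < \deg m$. The Cayley--Hamilton lemma gives $p({\bf A}) = {\bf 0}$, and $m({\bf A}) = {\bf 0}$ by definition, so $r({\bf A}) = {\bf 0}$. Minimality of $m$ then forces $r \equiv 0$, whence every root of $m$ is a root of $p$.

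Finally, for (3) $\Rightarrow$ (1): if ${\bf A}{\bf v} = \lambda {\bf v}$ with ${\bf v} \neq {\bf 0}$, then ${\bf A}^k {\bf v} = \lambda^k {\bf v}$ for every $k \geq 0$, so $m({\bf A}){\bf v} = m(\lambda){\bf v}$. Since $m({\bf A}) = {\bf 0}$ and ${\bf v} \neq {\bf 0}$, we conclude $m(\lambda) = 0$. The only step requiring some care is the division-algorithm argument showing that $m$ divides any polynomial annihilating ${\bf A}$, which is a textbook fact used both here and implicitly to guarantee existence and uniqueness of the minimal polynomial itself.
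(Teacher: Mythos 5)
Your proof is correct and complete: you establish the cycle (1) $\Rightarrow$ (2) $\Rightarrow$ (3) $\Rightarrow$ (1) via the standard textbook arguments (Cayley--Hamilton plus the division algorithm for $m \mid p$, the determinant characterization of singularity, and evaluation of $m({\bf A})$ on an eigenvector). The paper itself gives no proof of this lemma, explicitly deferring it to linear-algebra textbooks, so there is nothing to compare against; your argument is exactly the one such a textbook would give.
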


\begin{lemma} \label{thm:diag}
A square matrix $\bf A$ is diagonalizable if and only if the minimal
polynomial of $\bf A$ is the product of distinct linear functions.
\end{lemma}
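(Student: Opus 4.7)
The plan is to establish the equivalence in two standard steps, using the Cayley--Hamilton lemma and Lemma \ref{thm:mp_cp} as ready-made tools. Throughout I keep the convention set at the start of this section, namely that diagonalizability means real diagonalizability of the real matrix ${\bf A}$, and that all polynomials have real coefficients.

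For the forward implication, I would suppose ${\bf A}$ is diagonalizable, so there exists an invertible real ${\bf P}$ with ${\bf P}^{-1}{\bf A}{\bf P}={\bf D}$ diagonal. Letting $\mu_1,\ldots,\mu_k$ be the distinct diagonal entries of ${\bf D}$, I form the polynomial $q(\lambda)=\prod_{i=1}^k(\lambda-\mu_i)$ and observe that every diagonal entry of $q({\bf D})$ vanishes because one of the factors is zero. Hence $q({\bf D})={\bf 0}$, and conjugating back gives $q({\bf A})={\bf P}\,q({\bf D})\,{\bf P}^{-1}={\bf 0}$. By minimality the minimal polynomial of ${\bf A}$ divides $q$, so it is itself a product of distinct linear factors with real roots.

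For the backward implication, suppose the minimal polynomial factors as $m(\lambda)=\prod_{i=1}^k(\lambda-\mu_i)$ with pairwise distinct real $\mu_i$. I would invoke the kernel decomposition argument: the polynomials $m_i(\lambda)=\prod_{j\neq i}(\lambda-\mu_j)$ have no common zero, so by Bezout's identity in $\bbR[\lambda]$ there exist polynomials $u_i(\lambda)$ with $\sum_i u_i(\lambda)\,m_i(\lambda)=1$. Evaluating at ${\bf A}$ yields a resolution of the identity $\sum_i{\bf E}_i={\bf I}$ with ${\bf E}_i:=u_i({\bf A})\,m_i({\bf A})$, and since $({\bf A}-\mu_i{\bf I}){\bf E}_i=u_i({\bf A})\,m({\bf A})={\bf 0}$ by hypothesis, the image of each ${\bf E}_i$ lies in the eigenspace $\ker({\bf A}-\mu_i{\bf I})$. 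Consequently every vector decomposes as $\bx=\sum_i{\bf E}_i\bx$, a sum of eigenvectors; concatenating bases of the eigenspaces produces a basis in which ${\bf A}$ is diagonal.

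The only nontrivial ingredient is the Bezout-based kernel decomposition used in the backward direction, but this is entirely classical and presents no real obstacle; the remainder is assembly of textbook facts, together with the observation that reality of the $\mu_i$ is guaranteed by the real-coefficient convention imposed on the minimal polynomial.
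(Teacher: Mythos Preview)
Your argument is correct and entirely standard. The paper does not actually supply a proof of this lemma: it is listed among the preliminary results with the remark that ``most of the proofs can be found in textbooks of linear algebra,'' and no argument is given. Your forward direction (annihilate the diagonal form by the product of distinct linear factors, then invoke minimality) and your backward direction (Bezout/kernel decomposition to write the identity as a sum of projections onto eigenspaces) are precisely the textbook proofs the paper is implicitly invoking, so there is nothing to compare.
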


The following corollary can be derived from the above results:
\begin{corollary} \label{prop:non_diag}
For a square matrix $\bf A$, suppose $p(\lambda)$ is its
characteristic polynomial. If there exists a polynomial $q(\lambda)$
such that $p(\lambda)$ and $q(\lambda)$ share the same roots, and
$q(\bf A) \neq {\bf 0}$, then $\bf A$ is not diagonalizable.
\end{corollary}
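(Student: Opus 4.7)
The plan is to argue by contrapositive: assuming $\bf A$ \emph{is} diagonalizable, I show that every polynomial $q(\lambda)$ whose root set coincides with that of $p(\lambda)$ must satisfy $q(\bf A) = \bf 0$, contradicting the hypothesis of the corollary.

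First, invoking Lemma~\ref{thm:diag}, the diagonalizability of $\bf A$ lets me write its minimal polynomial as
\[
m(\lambda) = \prod_{i=1}^{k}(\lambda - \lambda_i)
\]
with the $\lambda_i$ pairwise distinct. By Lemma~\ref{thm:mp_cp}, the set $\{\lambda_1,\ldots,\lambda_k\}$ is exactly the set of eigenvalues of $\bf A$, hence the root set of $p(\lambda)$, and by hypothesis also the root set of $q(\lambda)$.

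Next, since each factor $(\lambda - \lambda_i)$ appears with multiplicity one in $m(\lambda)$ while each $\lambda_i$ is a root of $q(\lambda)$ (with multiplicity at least one), $m(\lambda)$ divides $q(\lambda)$ in $\bbR[\lambda]$, say $q(\lambda) = m(\lambda)\,r(\lambda)$. Because $m(\bf A) = \bf 0$ by the defining property of the minimal polynomial, this yields $q(\bf A) = m(\bf A)\, r(\bf A) = \bf 0$, the desired contradiction.

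No step is genuinely difficult; the one point requiring care is the divisibility step, where the simplicity of the roots of $m(\lambda)$ is essential. Without the diagonalizability assumption $m(\lambda)$ and $q(\lambda)$ could share their root sets while $m(\lambda) \nmid q(\lambda)$, so this is precisely where Lemma~\ref{thm:diag} enters the argument.
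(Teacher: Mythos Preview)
Your proof is correct and follows essentially the same approach as the paper: assume diagonalizability, use Lemmas~\ref{thm:mp_cp} and~\ref{thm:diag} to write the minimal polynomial as a product of distinct linear factors, conclude $m(\lambda)\mid q(\lambda)$, and derive $q(\mathbf{A})=\mathbf{0}$. The paper phrases it as a contradiction rather than a contrapositive and is slightly terser on the divisibility step, but the argument is the same.
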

\begin{proof}
  Let us prove it by contradiction. We suppose $\bf A$ is
  diagonalizable.  According to Lemma \ref{thm:mp_cp} and Lemma
  \ref{thm:diag}, the minimal polynomial of $\bf A$ can be written as
\begin{equation}
m(\lambda) = (\lambda - \lambda_1) \cdots (\lambda - \lambda_n),
\end{equation}
where $\lambda_1, \cdots, \lambda_n$ are all the distinct eigenvalues
of $\bf A$. Since $p(\lambda)$ and $q(\lambda)$ share the same roots,
we have $m(\lambda) \mid q(\lambda)$, and thus $q({\bf A}) = {\bf 0}$,
which violates the condition $q({\bf A}) \neq {\bf 0}$.
\end{proof}
We will use Lemma \ref{thm:diag} when proving a matrix is
diagonalizable, and use Corollary \ref{prop:non_diag} when proving a
matrix is not diagonalizable. The following lemma can be used to tell
if a polynomial has multiple roots.

\begin{lemma} \label{lem:multi_root}
Let $p(z)$ be a polynomial defined on complex numbers. Then $p(z)$ has
no multiple roots if and only if $p(z)$ and $\odd{z} p(z)$ are
coprime.
\end{lemma}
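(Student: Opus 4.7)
The plan is to prove both directions by using the factorization of $p(z)$ over $\mathbb{C}$ together with the product rule for differentiation; the underlying idea is that a repeated factor of $p(z)$ persists in $p'(z)$ but a simple factor does not.

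For the forward implication, I would argue by contrapositive: assume $p(z)$ and $p'(z)$ share a common (nonconstant) factor, so they have a common root $\alpha \in \mathbb{C}$. Writing $p(z) = (z-\alpha) q(z)$ and differentiating gives $p'(z) = q(z) + (z-\alpha) q'(z)$. Evaluating at $\alpha$ yields $p'(\alpha) = q(\alpha)$, and since $p'(\alpha)=0$ by assumption, $q(\alpha)=0$. Hence $(z-\alpha)$ divides $q(z)$, and therefore $(z-\alpha)^2$ divides $p(z)$, i.e., $\alpha$ is a multiple root of $p$.

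For the reverse implication, suppose $p(z)$ has a multiple root $\alpha$, so $p(z) = (z-\alpha)^k r(z)$ with $k \geq 2$. Applying the product rule gives
\begin{equation}
p'(z) = k(z-\alpha)^{k-1} r(z) + (z-\alpha)^k r'(z) = (z-\alpha)^{k-1}\bigl[k\, r(z) + (z-\alpha) r'(z)\bigr].
\end{equation}
Thus $(z-\alpha)$ is a common divisor of $p(z)$ and $p'(z)$ (since $k-1 \geq 1$), so the two polynomials are not coprime.

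Both directions are essentially a one-line application of the product rule, so I do not expect a real obstacle. The only subtlety worth pinning down is that ``coprime'' should be understood in the sense of having no nonconstant common divisor in $\mathbb{C}[z]$, which by the fundamental theorem of algebra is equivalent to having no common root in $\mathbb{C}$; this equivalence is what lets me move freely between the divisibility statement and the root statement used in each direction.
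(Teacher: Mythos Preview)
Your argument is correct and is the standard textbook proof of this classical fact. Note, however, that the paper does not actually prove this lemma: it is listed among the preliminary results in Section~\ref{sec:algebra} with the remark that ``most of the proofs can be found in textbooks of linear algebra,'' and no proof is supplied. So there is nothing to compare against; your write-up simply fills in a proof the authors chose to omit.
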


\begin{definition}
Let $p(z)$ and $q(z)$ be two polynomials defined as
\begin{equation}
p(z) = p_0 + p_1 z + p_2 z^2 + \cdots + p_m z^m, \quad
q(z) = q_0 + q_1 z + q_2 z^2 + \cdots + q_n z^n.
\end{equation}
The $(m+n) \times (m+n)$ matrix
\begin{equation}
\mathrm{Syl}(p,q) :=
\begin{array}{l@{}l}
  \begin{pmatrix}
    p_m & \cdots & \cdots & p_0 & \\
    & \ddots & & & \ddots & \\
    & & p_m & \cdots & \cdots & p_0 \\
    q_n & \cdots & \cdots & q_0 & \\
    & \ddots & & & \ddots & \\
    & & q_n & \cdots & \cdots & q_0
  \end{pmatrix} 
&
  \begin{array}{l}
    \left. \vphantom{%
      \begin{matrix} p_0 \\ \ddots \\ p_0 \end{matrix}%
    }\right\}
    \text{\footnotesize $n$ rows} \\
    \left. \vphantom{%
      \begin{matrix} q_0 \\ \ddots \\ q_0 \end{matrix}%
    }\right\}
    \text{\footnotesize $m$ rows}
  \end{array}
\end{array}
\end{equation}
is called the Sylvester matrix associated to $p(z)$ and $q(z)$. The
resultant of $p(z)$ and $q(z)$ is defined as the determinant of the
above matrix:
\begin{equation}
\mathrm{res}(p,q) := \mathrm{det} \left( \mathrm{Syl}(p,q) \right).
\end{equation}
\end{definition}

The following lemma is to be used to tell if two polynomials have
common roots.
\begin{lemma} \label{lem:coprime}
Two nonzero polynomials are coprime if and only if their resultant is
nonzero.
\end{lemma}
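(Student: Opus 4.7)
The plan is to reduce coprimality to a linear-algebra question about the Sylvester matrix, via the standard equivalence with Bezout-type identities. First I would establish the following claim: two nonzero polynomials $p(z)$ of degree $m$ and $q(z)$ of degree $n$ fail to be coprime if and only if there exist polynomials $u(z)$ and $v(z)$, not both zero, with $\deg u < n$ and $\deg v < m$, such that
\begin{equation}
  u(z)\, p(z) + v(z)\, q(z) = 0.
\end{equation}
The forward direction is immediate: if $d(z)$ is a common factor of positive degree, write $p = d\tilde p$ and $q = d\tilde q$ and take $(u,v) = (\tilde q, -\tilde p)$. For the converse, if $u p + v q = 0$ with $u \neq 0$ and $p,q$ coprime, then $p \mid v q$ forces $p \mid v$, contradicting $\deg v < \deg p$; the symmetric argument handles $v \neq 0$.

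Next I would interpret the above identity as a linear system in the coefficients of $u$ and $v$. Writing $u(z) = u_0 + u_1 z + \dots + u_{n-1} z^{n-1}$ and $v(z) = v_0 + v_1 z + \dots + v_{m-1} z^{m-1}$, the equation $up + vq = 0$ amounts to $m+n$ linear equations (one per coefficient of $z^k$, for $k = 0, \dots, m+n-1$) in the $m+n$ unknowns $u_0, \dots, u_{n-1}, v_0, \dots, v_{m-1}$. A direct inspection of the coefficients produced by multiplying $z^i p(z)$ and $z^j q(z)$ shows that the matrix of this system is exactly the transpose of $\mathrm{Syl}(p,q)$ defined in the excerpt.

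Finally I would conclude: the existence of a nontrivial solution $(u,v)$ is equivalent to the matrix being singular, i.e.\ $\det(\mathrm{Syl}(p,q)) = \mathrm{res}(p,q) = 0$. Combining with the first step, $p$ and $q$ are coprime if and only if $\mathrm{res}(p,q) \neq 0$.

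The main obstacle is the bookkeeping in the second step: one has to be careful that the ordering of rows and columns in the paper's definition of $\mathrm{Syl}(p,q)$ really matches the linear map $(u,v) \mapsto up + vq$ up to transposition, since transposition does not change the determinant but does matter if one tries to identify kernels directly. The algebraic content (the Bezout-style characterization in the first step) is standard; the only nontrivial work is verifying this indexing so that the claim ``singular Sylvester matrix $\iff$ nontrivial $(u,v)$'' is rigorous rather than merely plausible.
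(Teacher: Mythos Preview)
Your argument is correct and is precisely the standard textbook proof of this classical fact. Note, however, that the paper does not actually supply its own proof of Lemma~\ref{lem:coprime}: it is listed among the preliminary results for which ``most of the proofs can be found in textbooks of linear algebra,'' and no argument is given in the text. So there is nothing to compare against; your proposal simply fills in the omitted proof, and does so along the expected lines (reduce coprimality to the existence of a nontrivial low-degree relation $up+vq=0$, then identify this with the kernel of the Sylvester matrix). The only cosmetic point is that in the converse of your first step, once $p\mid v$ and $\deg v<\deg p$ force $v=0$, you should explicitly conclude $u=0$ from $up=0$ to reach the contradiction with ``not both zero''; as written the sentence stops one beat early.
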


\section{Boltzmann equation and Grad's 13-moment system}
\label{sec:Boltzmann_Grad}
From this section, we start our discussion on the kinetic models.
Section \ref{sec:Boltzmann} and \ref{sec:Grad13} give brief
introductions to the Boltzmann equation and Grad's 13-moment system
respectively, and Section \ref{sec:Grad13_1D} reviews the classical
knowledge on the hyperbolicity of Grad's 13-moment system for 1D
flows. In the last part of this section, we present our new findings
that an intrinsic difference exists between the hyperbolicity regions
in the 1D and 3D cases for Grad's system.

\subsection{Boltzmann equation} \label{sec:Boltzmann}
The Boltzmann equation is a fundamental physical model in the gas
kinetic theory. Suppose $f(t,\bx,\bxi)$ is the function of phase
density, where $t$ is the time, $\bx = (x_1, x_2, x_3)^T$ is the
spatial coordinates, and $\bxi = (\xi_1, \xi_2, \xi_3)^T$ stands for
the velocity of gas molecules. Then the Boltzmann equation reads
\begin{equation}
\pd{f}{t} + \bxi \cdot \nabla_{\bx} f
  = Q(f,f).
\end{equation}
The right hand side $Q(f,f)$ describes the interaction between
particles:
\begin{equation}
Q(f,f) = \int_{\mathbb{R}^3} \int_0^{2\pi} \int_0^{\pi/2}
  (f_*' f' - f_* f)
  |\bxi - \bxi_*| \sigma \sin \Theta
\dd \Theta \dd \varepsilon \dd \bxi_*,
\end{equation}
where $f' = f(t,\bx,\bxi')$, $f_* = f(t,\bx,\bxi_*)$, $f_*' =
f(t,\bx,\bxi_*')$, and the velocities $\bxi$, $\bxi_*$ and $\bxi'$,
$\bxi_*'$ are the pre- and post-collision velocities of a colliding
pair of molecules, and $\sigma$ is the differential cross-section. In
this paper, we consider only the Maxwell molecules, for which $|\bxi -
\bxi_*| \sigma$ is a function of $\Theta$. We refer the readers to
\cite{Cowling, Bird} for more details of the collision term and the
Maxwell molecules.

The gas kinetic theory describes the fluid states in a microscopic
view, while the macroscopic quantities such as density, velocity and
temperature can be obtained by integrations. Define
\begin{equation}
\langle h \rangle = m \int_{\mathbb{R}^3} h \dd \bxi,
\end{equation}
where $m$ is the mass of a single gas molecule. Then the relations
between the density function $f$ and some common macroscopic
quantities are as follows:
\begin{itemize}
\item Density: $\rho = \langle f \rangle$;
\item Velocity: $\bu = (u_1, u_2, u_3)^T =
  \rho^{-1} \langle \bxi f \rangle$;
\item Temperature: $T = (3 \rho k_B / m)^{-1} \langle |\bxi - \bu|^2 f
  \rangle$, where $k_B$ is the Boltzmann constant;
\item Temperature tensor: $T_{ij} = (\rho k_B / m)^{-1} \langle
  (\xi_i-u_i) (\xi_j-u_j) f \rangle$, $i,j = 1,2,3$;
\item Heat flux: $\bq = (q_1, q_2, q_3)^T = \langle |\bxi - \bu|^2
  (\bxi - \bu) f / 2 \rangle$.
\end{itemize}
Following the conventional style, we denote
\begin{equation}
\theta = \frac{k_B}{m} T, \quad \theta_{ij} = \frac{k_B}{m} T_{ij},
  \quad \bTheta = (\theta_{ij})_{3\times 3}.
\end{equation}
It can be derived from the positivity of the density function $f$ that
$\bTheta$ is symmetric positive definite. For simplicity, below we
denote the relative velocity $\bxi - \bu$ by $\bC$, and the norm of
$\bC$ is denoted by $C$. For example, we have
\begin{equation}
\theta = (3 \rho)^{-1} \langle C^2 f \rangle, \quad
\theta_{ij} = \rho^{-1} \langle C_i C_j f \rangle.
\end{equation}

\subsection{Grad's 13-moment system} \label{sec:Grad13}
The high dimensionality of the Boltzmann equation introduces extreme
difficulties to its numerical treatment. In order to simplify the
model, Grad proposed a 13-moment system \cite{Grad}, in which the
velocity variable $\bxi$ was eliminated, while only 13 equations
are presented. These equations are derived by assuming the following
particular form of the phase density $f$:
\begin{equation} \label{eq:dis_13m}
f|_{13} = \left[
  1 + \frac{\theta_{ij} - \delta_{ij} \theta}{2\theta^2} \left(
    C_i C_j - \delta_{ij} C^2
  \right) + \frac{2}{5} \frac{q_k}{\rho \theta^2} C_k \left(
    \frac{C^2}{2 \theta} - \frac{5}{2}
  \right)
\right] f_M,
\end{equation}
where $f_M$ is the Maxwellian, defined as
\begin{equation}
f_M = \frac{\rho}{(2\pi \theta)^{3/2}}
  \exp \left( -\frac{C^2}{2\theta} \right),
\end{equation}
and in \eqref{eq:dis_13m}, the Einstein summation convention is
assumed. Accordingly, when an index appears twice in a single term, it
implies summation of that term over all the values of the index. By
\eqref{eq:dis_13m}, Grad's 13-moment system can be written as
a closed system as
\begin{equation} \label{eq:G13_drv}
\left \langle \bphi \pd{f|_{13}}{t} \right \rangle
  + \langle \bphi (\bxi \cdot \nabla_{\bx} f|_{13}) \rangle
  = \langle \bphi Q(f|_{13},f|_{13}) \rangle,
\end{equation}
where 
\begin{displaymath}
\begin{split}
\bphi ={} & (1, \\
       & ~C_1, C_2, C_3, \\
       & ~C_1^2, C_2^2, C_3^2, C_1 C_2, C_1 C_3, C_2 C_3, \\
       & ~C^2 C_1, C^2 C_2, C^2 C_3)^T.
\end{split}
\end{displaymath}
By simplifications of the expression obtained after the integrations, the 
above system can be explicitly given by
\begin{equation} \label{eq:G13}
\begin{split}
& \frac{\mathrm{d} \rho}{\mathrm{d} t} +
  \rho \pd{u_k}{x_k} = 0, \\
& \frac{\mathrm{d} u_i}{\mathrm{d} t} +
  \frac{\theta_{ik}}{\rho} \pd{\rho}{x_k} +
  \pd{\theta_{ik}}{x_k} = 0, \quad i = 1,2,3, \\
& \frac{\mathrm{d} \theta_{ij}}{\mathrm{d} t} +
  2 \theta_{k(i} \pd{u_{j)}}{x_k} +
  \frac{1}{\rho} \left(
    \frac{4}{5} \pd{q_{(i}}{x_{j)}} +
    \frac{2}{5} \delta_{ij} \pd{q_k}{x_k}
  \right) =
    -\frac{\rho\theta}{\mu} (\theta_{ij} - \delta_{ij} \theta),
  \quad i,j = 1,2,3, \\
& \frac{\mathrm{d} q_i}{\mathrm{d} t} -
  (\theta_{ij}\theta_{jk} - 2\theta\theta_{ik} + \theta^2 \delta_{ik})
    \pd{\rho}{x_k} +
  \frac{7}{5} q_i \pd{u_k}{x_k} +
  \frac{7}{5} q_k \pd{u_i}{x_k} +
  \frac{2}{5} q_k \pd{u_k}{x_i} \\
& \qquad -\rho \theta_{ik} \left(
    \pd{\theta_{jk}}{x_j} -
    \frac{7}{6} \pd{\theta_{jj}}{x_k}
  \right) + 2\rho \theta \left(
    \pd{\theta_{ik}}{x_k} -
    \frac{1}{3} \pd{\theta_{jj}}{x_i}
  \right) = -\frac{2}{3} \frac{\rho \theta}{\mu} q_i, \quad i=1,2,3.
\end{split}
\end{equation}
Here, \[ \odd{t} = \pdd{t} + u_k \pdd{x_k} \]
is the material derivative, and the brackets around indices denote the
symmetrization of a tensor. The symbol $\mu$ denotes the coefficient
of viscosity. For Maxwell molecules, $\mu$ is proportional to
$\theta$.

\subsection{Local hyperbolicity of 1D Grad's moment system}
\label{sec:Grad13_1D}
It is well-known that the 1D Grad's moment system is hyperbolic only
when the fluid is near the thermodynamic equilibrium state. For 1D
flows, the 13-moment system reduces to a smaller system containing
only five equations, which are obtained by setting $u_2 = u_3 =
\theta_{12} = \theta_{13} = \theta_{23} = q_2 = q_3 = 0$ and
$\theta_{33} = \theta_{22}$ in \eqref{eq:G13}. Such operation
eliminates eight of the thirteen variables, and results in the 1D
system
\begin{equation} \label{eq:G13_1D}
\pd{\hat{\bw}}{t} +
  \hat{\bf M}(\hat{\bw}) \pd{\hat{\bw}}{x} =
  \hat{\bQ}(\hat{\bw}),
\end{equation}
where $\hat{\bw} = (\rho, u_1, \theta_{11}, \theta_{22}, q_1)^T$,
$\hat{\bQ}(\hat{\bw}) = \big( 0, 0, \rho \theta (\theta-\theta_{11})
/ \mu, \rho \theta (\theta-\theta_{22}) / \mu, -\frac{2}{3} \rho
\theta q_1 / \mu \big)^T$, and
\begin{equation}
\hat{\bf M}(\hat{\bw}) = \begin{pmatrix}
  u_1 & \rho & 0 & 0 & 0 \\
  \theta_{11} / \rho & u_1 & 1 & 0 & 0 \\
  0 & 2\theta_{11} & u_1 & 0 & 6/(5\rho) \\
  0 & 0 & 0 & u_1 & 2/(5\rho) \\
  -4(\theta_{11} - \theta_{22})^2 / 9 & 16q_1 / 5 &
    \rho(11\theta_{11} + 16\theta_{22}) / 18 &
    \rho(17\theta_{11} - 8\theta_{22}) / 9 & u_1
\end{pmatrix}.
\end{equation}
The system \eqref{eq:G13_1D} is hyperbolic if and only if $\hat{\bf M}
(\hat{\bw})$ is real diagonalizable.

In order to check the diagonalizability of $\hat{\bf M}(\hat{\bw})$,
we calculate its characteristic polynomial as
\begin{equation}
\begin{split}
\mathrm{det}(\lambda {\bf I} - \hat{\bf M}) &=
  (\lambda - u_1) \bigg[
    (\lambda - u_1)^4 -
    \frac{2}{45} (101 \theta_{11} + 16 \theta_{22}) (\lambda-u_1)^2 \\
& \qquad \qquad -\frac{96}{25} \frac{q_1}{\rho} (\lambda - u_1) +
    \frac{1}{15} (53\theta_{11}^2 - 16 \theta_{11} \theta_{22} +
      8 \theta_{22}^2)
  \bigg].
\end{split}
\end{equation}
We introduce the dimensionless quantity $\hat{\lambda} = (\lambda - u_1) /
\sqrt{\theta}$, and then the equation $\mathrm{det}(\lambda {\bf I} -
\hat{\bf M}) = 0$ becomes
\begin{equation} \label{eq:ev_1D}
\hat{\lambda} \left[ \hat{\lambda}^4 -
  \frac{2}{45} \frac{101 \theta_{11} + 16 \theta_{22}}{\theta} \hat{\lambda}^2 -
  \frac{96}{25} \frac{q_1}{\rho \theta^{3/2}} \hat{\lambda} +
  \frac{1}{15} \frac{53\theta_{11}^2 - 16 \theta_{11} \theta_{22} +
    8 \theta_{22}^2}{\theta^2}
\right] = 0.
\end{equation}
Consider the special case $\theta_{11} = \theta_{22} = \theta$ and $q_1
= 0$, which implies the fluid is in its local equilibrium, all solutions
of the above equation are
\begin{equation} \label{eq:ev_1D_eq}
\hat{\lambda}_{1,5} = \pm\sqrt{\frac{13 + \sqrt{94}}{5}}, \quad
\hat{\lambda}_{2,4} = \pm\sqrt{\frac{13 - \sqrt{94}}{5}}, \quad
\hat{\lambda}_3 = 0.
\end{equation}
Therefore, in this case, $\hat{\bf M}(\hat{\bw})$ has no multiple
eigenvalues, thus is real diagonalizable. If $(\theta_{11} -
\theta_{22}) / \theta$ and $q_1 / (\rho \theta^{3/2})$ are small
enough, the roots of \eqref{eq:ev_1D} are small perturbations of
\eqref{eq:ev_1D_eq}, which are still real and separable. This shows
that there is a hyperbolicity region for 1D moment system around the
thermodynamic equilibrium, and the Maxwell distribution is an interior
point of the hyperbolicity region. A precise depiction of the
hyperbolicity region can be found in \cite{Muller}.


\subsection{Lack of hyperbolicity of 3D Grad's 13-moment system}
\label{sec:Grad13_3D}
To the best of our knowledge, there has not been any published
investigation on the hyperbolicity of the full 3D Grad's system. One
may take it for granted that the full 3D case is similar as the 1D
case and there exists a neighbourhood of the equilibrium such that the
system is hyperbolic. Unfortunately, this is not true.  In this
section, we are going to show that Maxwellian is on the boundary of
the hyperbolicity region. The analysis below contains some tedious
calculations, which are carried out by the computer algebra system
Mathematica \cite{Mathematica}.

In the 3D case, Grad's 13-moment equations can also be written
in the quasi-linear form as
\begin{equation} \label{eq:Grad}
\pd{\bw}{t} +
  {\bf M}_k(\bw) \pd{\bw}{x_k} = \bQ(\bw).
\end{equation}
Now $\bw$ is a vector with 13 entries:
\begin{displaymath}
\bw = (\rho, u_1, u_2, u_3, \theta_{11}, \theta_{22}, \theta_{33},
  \theta_{12}, \theta_{13}, \theta_{23}, q_1, q_2, q_3)^T.
\end{displaymath}
The expressions of the matrices ${\bf M}_k$ and the operator $\bQ$ can
be obtained from \eqref{eq:G13}. Since Grad's moment system is
rotationally invariant, in order to check the hyperbolicity of \eqref
{eq:Grad}, we only need to check the diagonalizability of ${\bf M}_1$.
As a reference, the precise form of ${\bf M}_1(\bw)$ is given on page
\pageref{eq:M1}.

\begin{table}[p]
\centering
\noindent\rotatebox{-90}{\vbox{%
\small
\renewcommand\arraystretch{1.5}
\begin{displaymath} \label{eq:M1}
{\bf M}_1(\bw) = \left( \begin{array}{ccccccccccccc}
u_1 & \rho & 0 & 0 & 0 & 0 & 0 & 0 & 0 & 0 & 0 & 0 & 0 \\
\frac{\theta_{11}}{\rho} & u_1 &
  0 & 0 & 1 & 0 & 0 & 0 & 0 & 0 & 0 & 0 & 0 \\
\frac{\theta_{12}}{\rho} & 0 & u_1 &
  0 & 0 & 0 & 0 & 0 & 1 & 0 & 0 & 0 & 0 \\
\frac{\theta_{13}}{\rho} & 0 & 0 & u_1 &
  0 & 0 & 0 & 0 & 0 & 1 & 0 & 0 & 0 \\
0 & 2\theta_{11} & 0 & 0 & u_1 &
  0 & 0 & 0 & 0 & 0 & \frac{6}{5\rho} & 0 & 0 \\
0 & 0 & 2\theta_{12} & 0 & 0 & u_1 &
  0 & 0 & 0 & 0 & \frac{2}{5\rho} & 0 & 0 \\
0 & 0 & 0 & 2\theta_{13} & 0 & 0 & u_1 &
  0 & 0 & 0 & \frac{2}{5\rho} & 0 & 0 \\
0 & \theta_{12} & \theta_{11} & 0 &
  0 & 0 & 0 & u_1 & 0 & 0 & 0 & \frac{2}{5\rho} & 0 \\
0 & \theta_{13} & 0 & \theta_{11} &
  0 & 0 & 0 & 0 & u_1 & 0 & 0 & 0 & \frac{2}{5\rho} \\
0 & 0 & \theta_{13} & \theta_{12} &
  0 & 0 & 0 & 0 & 0 & u_1 & 0 & 0 & 0 \\
-(\theta - \theta_{11})^2 - (\theta_{12}^2 + \theta_{13}^2) &
  \frac{16q_1}{5} & \frac{2q_2}{5} & \frac{2q_3}{5} &
  \frac{\rho(\theta_{11} + 8\theta)}{6} &
  \frac{\rho(7\theta_{11} - 4\theta)}{6} &
  \frac{\rho(7\theta_{11} - 4\theta)}{6} &
  -\rho \theta_{12} & -\rho \theta_{13} & 0 & u_1 & 0 & 0 \\
\theta_{12}\theta_{33} - \theta_{13}\theta_{23} - \theta\theta_{12} &
  \frac{7q_2}{5} & \frac{7q_1}{5} & 0 &
  \frac{\rho \theta_{12}}{6} & 
  \frac{7\rho \theta_{12}}{6} & 
  \frac{7\rho \theta_{12}}{6} & 
  \rho(2\theta - \theta_{22}) & -\rho \theta_{23} & 0 & 0 & u_1 & 0 \\
\theta_{13}\theta_{22} - \theta_{12}\theta_{23} - \theta\theta_{13} &
  \frac{7q_3}{5} & 0 & \frac{7q_1}{5} &
  \frac{\rho \theta_{13}}{6} & 
  \frac{7\rho \theta_{13}}{6} & 
  \frac{7\rho \theta_{13}}{6} & 
  -\rho \theta_{23} & \rho(2\theta - \theta_{33}) & 0 & 0 & 0 & u_1
\end{array} \right)
\end{displaymath}}}
\end{table}

When $\bw$ represents the equilibrium state, which means
\begin{equation} \label{eq:Maxwellian}
\theta_{12} = \theta_{13} = \theta_{23} = q_1 = q_2 = q_3 = 0,
  \quad \theta_{11} = \theta_{22} = \theta_{33} = \theta.
\end{equation}
The characteristic polynomial of ${\bf M}_1(\bw)$ is
\begin{displaymath}
\mathrm{det}(\lambda {\bf I} - {\bf M}_1) = \frac{1}{125}
  (\lambda - u_1)^5 [5(\lambda - u_1)^2 - 7\theta]^2
  [5(\lambda - u_1)^4 - 26 \theta (\lambda - u_1)^2 + 15 \theta^2].
\end{displaymath}
All roots of the above polynomial are
\begin{displaymath}
u_1, \quad u_1 \pm \sqrt{\frac{7}{5} \theta}, \quad
u_1 \pm \sqrt{\frac{13 + \sqrt{94}}{5} \theta}, \quad
u_1 \pm \sqrt{\frac{13 - \sqrt{94}}{5} \theta}.
\end{displaymath}
Thus the eigenvalues of ${\bf M}_1$ are all real. In order to check
its diagonalizability, let
\begin{displaymath}
q(\lambda) =
  \frac{1}{25} (\lambda-u_1) [5(\lambda-u_1)^2 - 7\theta] \cdot
  [5(\lambda - u_1)^4 - 26 \theta (\lambda - u_1)^2 + 15 \theta^2].
\end{displaymath}
Direct verification shows $q({\bf M}_1) = {\bf 0}$. According to
Lemma \ref{thm:diag}, ${\bf M}_1(\bw)$ is real diagonalizable at the
equilibrium state.

In order to show that the equilibrium is on the boundary of the
hyperbolicity region, we consider the following case:
\begin{equation} \label{eq:Gaussian}
\theta_{13} = \theta_{23} = q_1 = q_2 = q_3 = 0, \quad
  \theta_{11} = \theta_{22} = \theta_{33} = \theta.
\end{equation}
When $f$ is the following Gaussian distribution:
\begin{equation} \label{eq:Gaussian_f}
f = \frac{\rho}{\sqrt{\mathrm{det}(2\pi\bTheta)}} \exp \left(
  -\frac{1}{2} \bC^T \bTheta^{-1} \bC
\right), \quad
\bTheta = \left( \begin{array}{ccc}
  \theta & \theta_{12} & 0 \\
  \theta_{12} & \theta & 0 \\
  0 & 0 & \theta
\end{array} \right),
\end{equation}
the relation \eqref{eq:Gaussian} is satisfied. When $|\theta_{12}| <
\theta$, the matrix $\bTheta$ is positive definite, and thus the
distribution function \eqref{eq:Gaussian_f} can be a physical
configuration. Substituting \eqref{eq:Gaussian} into \eqref{eq:M1}
and calculating the characteristic polynomial of ${\bf M}_1(\bw)$, one
has
\begin{displaymath}
\begin{split}
\mathrm{det}(\lambda {\bf I} - {\bf M}_1) &= \frac{1}{125}
  (\lambda - u_1)^3 [5(\lambda - u_1)^2 - 7\theta] \cdot
  r \left( \frac{(\lambda - u_1)^2}{\theta} \right),
\end{split}
\end{displaymath}
where
\begin{displaymath}
r(x) = 25x^4 - 165x^3 +
  \left( 257 + 48 \frac{\theta_{12}^2}{\theta^2} \right) x^2 +
  \left( 8 \frac{\theta_{12}^2}{\theta^2} - 105 \right) x -
  28 \frac{\theta_{12}^2}{\theta^2}.
\end{displaymath}
Let
\begin{displaymath}
\begin{split}
q(\lambda) &= 
  (\lambda - u_1) [5(\lambda - u_1)^2 - 7\theta] \cdot 
  r \left( \frac{(\lambda - u_1)^2}{\theta} \right).
\end{split}
\end{displaymath}
Obviously $q(\lambda)$ and $\mathrm{det}(\lambda {\bf I} - {\bf M}_1)$
share the same roots. Direct calculation of $q({\bf M}_1)$ gives us that
\begin{displaymath}
q({\bf M}_1) = \frac{56 \theta^2 \theta_{12}^3}{\rho}
  (\rho \theta {\bf E}_{10,4} - {\bf E}_{10,13}),
\end{displaymath}
where ${\bf E}_{i,j} = \be_i \be_j^T$, and $\be_j$ is the unit vector
with the $j$-th entry being $1$. According to Corollary \ref
{prop:non_diag}, if $\theta_{12} \neq 0$, then ${\bf M}_1(\bw)$ is not
diagonalizable. Actually, one may find that $r(x)$ have at least one
negative root since $r(-\infty) > 0$ and $r(0) < 0$, and therefore
${\bf M}_1(\bw)$ has eigenvalues with nonzero imaginary parts, which
also violates the hyperbolic condition.

The above analysis shows that when \eqref{eq:Gaussian} and
$\theta_{12} \neq 0$ holds, the hyperbolicity of \eqref{eq:Grad}
breaks down, no matter how small the value of $\theta_{12}$ is. It
turns out that there does not exist a neighbourhood of the equilibrium
such that all the states in this neighbourhood lead to the
hyperbolicity of Grad's 13-moment system. Without the hyperbolicity in
a neighbourhood of the equilibrium, the wellposedness of the Grad's
13-moment system is not guranteed even if the phase density is
extremely close to the equilibrium. This severe drawback may be the
possible reason why there are hardly any positive evidences for the
Grad's 13-moment system in the last decades.

\clearpage


\section{Modified 13-moment System} \label{sec:new_13m}
The results in Section \ref{sec:Grad13_3D} reveal a crucial issue of
Grad's original system. In order to establish the local hyperbolicity
around the equilibrium state, we derive a modified 13-moment system in
this section, which is hyperbolic for any states close enough to the
equilibrium. The proofs will be given in detail, and the size of the
hyperbolicity region will be discussed.

\subsection{Derivation of the modified system}
The modified 13-moment system is based on the following assumption of
the phase density:
\begin{equation} \label{eq:tilde_f_13}
\tilde{f}|_{13} = \left[
  1 + \frac{2}{5\rho} \bs^T \bTheta^{-1} \bC \left(
    \frac{1}{2} \bC^T \bTheta^{-1} \bC - \frac{5}{2}
  \right)
\right] f_G,
\end{equation}
where $\bs = (s_1, s_2, s_3)^T$, and $f_G$ is a Gaussian distribution:
\begin{equation} \label{eq:f_G}
f_G = \frac{\rho}{m\sqrt{\mathrm{det}(2\pi\bTheta)}} \exp \left(
  -\frac{1}{2} \bC^T \bTheta^{-1} \bC
\right).
\end{equation}
Comparing with $f_M$, the function $f_G$ incorporates the whole
temperature tensor into the exponent, and thus it can be expected that
such an approximation includes more nonlinearity than
\eqref{eq:dis_13m}, and is more suitable for describing anisotropic
density functions. In order to meet the requirement of orthogonality,
the vector $\bs$ should be related to the density function by
\begin{displaymath}
\bs = \frac{1}{2} \langle C_G^2 \bC f \rangle,
\end{displaymath}
where $C_G^2 = \bC^T \bTheta^{-1} \bC$. For the postulate
\eqref{eq:tilde_f_13}, the relation between $\bs$ and the heat flux
$\bq$ is
\begin{displaymath}
q_j = \frac{3}{5} s_{(i} \theta_{ij)}.
\end{displaymath}
Similar as the derivation of
Grad's 13-moment system, the new moment system can be written as
\begin{displaymath}
\left \langle \tilde{\bphi}
  \pd{\tilde{f}|_{13}}{t} \right \rangle +
\left \langle \tilde{\bphi}
  (\bxi \cdot \nabla_{\bx} \tilde{f}|_{13}) \right \rangle
= \left \langle
  \tilde{\bphi} Q(\tilde{f}|_{13},\tilde{f}|_{13}) \right\rangle,
\end{displaymath}
where 
\begin{displaymath}
\begin{split}
\tilde{\bphi} ={} & (1, \\
               & ~C_1, C_2, C_3, \\
               & ~C_1^2, C_2^2, C_3^2, C_1 C_2, C_1 C_3, C_2 C_3, \\
               & ~C_G^2 C_1, C_G^2 C_2, C_G^2 C_3)^T.
\end{split}
\end{displaymath}
We reformulate the resulting system in explicit form as
\begin{subequations} \label{eq:new_13m}
\begin{align}
& \frac{\mathrm{d} \rho}{\mathrm{d} t} +
  \rho \pd{u_k}{x_k} = 0, \\
& \frac{\mathrm{d} u_i}{\mathrm{d} t} +
  \frac{\theta_{ik}}{\rho} \pd{\rho}{x_k} +
  \pd{\theta_{ik}}{x_k} = 0, \quad i = 1,2,3, \\
& \frac{\mathrm{d} \theta_{ij}}{\mathrm{d} t} +
  2 \theta_{k(i} \pd{u_{j)}}{x_k} +
  \frac{6}{5\rho} \left(
    s_{(i} \pd{\theta_{jk)}}{x_k} +
    \theta_{(ij} \pd{s_{k)}}{x_k}
  \right) =
    -\frac{\rho\theta}{\mu} (\theta_{ij} - \delta_{ij} \theta),
  \quad i,j = 1,2,3, \\
\label{eq:s_j}
\begin{split}
& \frac{\mathrm{d} s_j}{\mathrm{d} t} -
  \frac{6}{5} \theta^{ik} \theta_{l(i} s_j
    \pd{u_{k)}}{x_l} -
  \frac{18}{25\rho} \theta^{ik} \left(
    s_{(i} s_j \pd{\theta_{kl)}}{x_l} +
    s_{(i} \theta_{jk} \pd{s_{l)}}{x_l}
  \right) \\
& \qquad + \frac{1}{2} \left(
    \rho \theta^{ik} \theta_{jl}
      \pd{\theta_{ik}}{x_l} +
    2\rho \pd{\theta_{jl}}{x_l}
  \right) + \frac{2}{5} \left(
    7 s_{(j} \pd{u_{l)}}{x_l} +
    \theta^{ik} \theta_{jl} s_{(i}
      \pd{u_{k)}}{x_l}
  \right) = \tilde{Q}_j, \quad j = 1,2,3.
\end{split}
\end{align}
\end{subequations}
In equation \eqref{eq:s_j}, $\theta^{ij}$ stands for the $(i,j)$
entry of matrix $\bTheta^{-1}$, and
\begin{displaymath}
\tilde{Q}_j = -\frac{\rho \theta}{\mu} \left(
  \frac{71}{30} s_j - \frac{9}{10} \theta \theta^{(ii} s_{j)}
    - \frac{1}{15} \theta^{ii} \theta_{jk} s_k
\right).
\end{displaymath}
The expressions of $\tilde{Q}_j$ are obtained by using the following
properties of Maxwell molecules:
\begin{displaymath}
\begin{gathered}
\langle C_i C_j Q(f,f) \rangle = -\frac{\rho \theta}{\mu}
  \left \langle
    \left( C_i C_j - \frac{1}{3} C^2 \delta_{ij} \right) f
  \right \rangle, \quad
\langle C^2 C_j Q(f,f) \rangle =
  -\frac{2}{3} \frac{\rho \theta}{\mu} \langle C^2 C_j f \rangle, \\
\left \langle \left(
  C_i C_j C_k - \frac{3}{5} C^2 C_{(i} \delta_{jk)}
\right) Q(f,f) \right \rangle =
-\frac{3}{2} \frac{\rho \theta}{\mu} \left \langle \left(
  C_i C_j C_k - \frac{3}{5} C^2 C_{(i} \delta_{jk)}
\right) f \right \rangle.
\end{gathered}
\end{displaymath}

The system \eqref{eq:new_13m} can also be written in a quasi-linear
form:
\begin{equation} \label{eq:new_13m_mat}
\pd{\tilde{\bw}}{t} +
  \tilde{\bf M}_k(\tilde{\bw})
    \pd{\tilde{\bw}}{x_k} =
  \tilde{\bQ}(\tilde{\bw}),
\end{equation}
and we choose
\begin{displaymath}
\tilde{\bw} = (\rho, u_1, u_2, u_3, \theta_{11}, \theta_{22},
  \theta_{33}, \theta_{12}, \theta_{13}, \theta_{23}, s_1, s_2, s_3)^T.
\end{displaymath}
Since the linear space spanned by all the components of
$\tilde{\bphi}$ is rotationally invariant, the moment equations \eqref
{eq:new_13m} are also rotationally invariant. Therefore, below we
focus on the first coefficient matrix $\tilde{\bf M}_1(\tilde{\bw})$.

\subsection{Local hyperbolicity of the modified system}
Before establishing the local hypebolicity of \eqref{eq:new_13m}, we
provide a technical lemma first:

\begin{lemma} \label{lem:equiv}
For a given symmetric positive definite matrix $\bTheta =
(\theta_{ij})_{3\times 3}$, the inequality
\begin{equation} \label{eq:ineq}
\theta_{11}^{-1} s_1^2 \leqslant \bs^T \bTheta^{-1} \bs
\end{equation}
holds for any vector $\bs = (s_1, s_2, s_3)^T \in \mathbb{R}^3$. The
equality holds if and only if there exists a constant $k$ such that
\begin{equation} \label{eq:equivalence}
s_1 = k \theta_{11}, \quad s_2 = k \theta_{12},
  \quad s_3 = k \theta_{13}.
\end{equation}
\end{lemma}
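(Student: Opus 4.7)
The plan is to recognize that since $\bTheta$ is symmetric positive definite, so is $\bTheta^{-1}$, and therefore the bilinear form $\langle \bu, \bv \rangle := \bu^T \bTheta^{-1} \bv$ defines an inner product on $\bbR^3$. The inequality \eqref{eq:ineq} should then fall out of the Cauchy--Schwarz inequality for this inner product, once an appropriate test vector is chosen.

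The key step is the choice of test vector. I would take $\bu := \bTheta \be_1$, so that its $i$-th entry is $\theta_{i1}$. With this choice, the inner products against $\bs$ and against itself telescope nicely:
\begin{equation}
\langle \bu, \bs \rangle = \be_1^T \bTheta \, \bTheta^{-1} \bs = \be_1^T \bs = s_1,
\qquad
\langle \bu, \bu \rangle = \be_1^T \bTheta \, \bTheta^{-1} \bTheta \be_1 = \theta_{11}.
\end{equation}
The Cauchy--Schwarz inequality $\langle \bu, \bs \rangle^2 \leqslant \langle \bu, \bu \rangle \langle \bs, \bs \rangle$ then reads $s_1^2 \leqslant \theta_{11} \, \bs^T \bTheta^{-1} \bs$, and dividing by $\theta_{11} > 0$ (which is positive because $\bTheta$ is positive definite and $\theta_{11} = \be_1^T \bTheta \be_1$) gives exactly \eqref{eq:ineq}.

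For the equality case, I would invoke the standard characterization of equality in Cauchy--Schwarz: it holds if and only if $\bs$ and $\bu$ are linearly dependent, i.e.\ there exists $k \in \bbR$ with $\bs = k \bu = k \bTheta \be_1$. Reading off the components of $\bTheta \be_1$ yields $s_1 = k\theta_{11}$, $s_2 = k\theta_{12}$, $s_3 = k\theta_{13}$, which is precisely \eqref{eq:equivalence}. Conversely, substituting \eqref{eq:equivalence} into both sides of \eqref{eq:ineq} directly verifies equality, so the characterization is an if-and-only-if.

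I do not anticipate a genuine obstacle here: the entire proof rests on picking the right test vector so that the Gram quantities collapse to $s_1$ and $\theta_{11}$. If one preferred a more explicit route, an alternative would be to use a Schur complement decomposition of $\bTheta^{-1}$ relative to the $(1,1)$ entry, but the Cauchy--Schwarz argument above is shorter and makes the equality condition transparent.
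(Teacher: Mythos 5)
Your proof is correct and takes a genuinely different, and considerably cleaner, route than the paper's. The paper argues by brute force: it first establishes the auxiliary inequality $\theta_{12}\theta^{12} + \theta_{13}\theta^{13} \leqslant 0$ via a discriminant argument on a quadratic in $\theta_{13}$, then writes $\mathfrak{D} := \bs^T \bTheta^{-1}\bs - \theta_{11}^{-1}s_1^2$ as a quadratic in $s_1$ and splits into the cases $\theta_{11}\theta^{11} = 1$ and $\theta_{11}\theta^{11} \neq 1$, reading off the sign of $\mathfrak{D}$ and the equality condition from the discriminant in each case. Your approach replaces all of this with a single application of Cauchy--Schwarz in the inner product $\langle \cdot, \cdot \rangle$ induced by $\bTheta^{-1}$, with the test vector $\bu = \bTheta \be_1$ chosen precisely so that the Gram quantities collapse to $\langle \bu, \bs\rangle = s_1$ and $\langle \bu, \bu\rangle = \theta_{11}$; the equality characterization then falls out immediately from linear dependence (and since $\bu = \bTheta\be_1 \neq 0$ because $\bTheta$ is invertible, linear dependence is exactly $\bs = k\bu$, with the symmetry of $\bTheta$ turning $\theta_{i1}$ into $\theta_{1i}$). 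The conceptual gain is real: the paper's computation gives no indication of why the inequality holds, whereas yours identifies it as a disguised Cauchy--Schwarz inequality and makes the equality condition transparent and coordinate-free.
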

\begin{proof}
We first prove $\theta_{12} \theta^{12} + \theta_{13} \theta^{13}
\leqslant 0$. Let $\mathfrak{S} = \theta_{12} \theta^{12} +
\theta_{13} \theta^{13}$. Then we have
\begin{equation} \label{eq:linear}
\theta_{12} (\theta_{23} \theta_{13} - \theta_{12} \theta_{33}) +
  \theta_{13} (\theta_{12} \theta_{23} - \theta_{22} \theta_{13}) =
  \mathfrak{S} \mathrm{det}(\bTheta).
\end{equation}
This equation can be considered as a quadratic equation of
$\theta_{13}$, and its discriminant is
\begin{displaymath}
\Delta = (2\theta_{12} \theta_{23})^2 - \theta_{22}
  [4 \theta_{12}^2 \theta_{33} + \mathfrak{S} \mathrm{det}(\bTheta)]
= 4\theta_{12}^2 (\theta_{23}^2 - \theta_{22} \theta_{33}) -
  \mathfrak{S} \theta_{22} \mathrm{det}(\bTheta).
\end{displaymath}
Since $\bTheta$ is positive definite, the following inequalities hold:
\begin{displaymath}
\theta_{23}^2 - \theta_{22} \theta_{33} < 0, \quad
\theta_{22} > 0, \quad \mathrm{det}(\bTheta) > 0.
\end{displaymath}
Thus, in order that \eqref{eq:linear} is not less than zero,
$\mathfrak{S} \leqslant 0$ must hold. Moreover, if $\mathfrak{S} = 0$,
$\theta_{12}$ must also be zero, and then \eqref{eq:linear} becomes
$\theta_{22} \theta_{13}^2 = 0$, which means $\theta_{13} = 0$.
Obviously when $\theta_{12} = \theta_{13} = 0$, one has $\mathfrak{S}
= 0$. Therefore we finally conclude that $\mathfrak{S} \leqslant 0$,
and the equality holds if and only if $\theta_{12} = \theta_{13} = 0$.

Now let $\mathfrak{D} = \bs^T \bTheta^{-1} \bs - \theta_{11}^{-1}
s_1^2$, which can be written as
\begin{equation} \label{eq:cond}
(\theta_{11} \theta^{11} - 1) s_1^2 +
  2\theta_{11} (\theta^{12} s_2 + \theta^{13} s_3) s_1 +
  \theta_{11}(\theta^{22} s_2^2+2\theta^{23} s_2 s_3+\theta^{33} s_3^2)
= \theta_{11} \mathfrak{D}.
\end{equation}
We consider the following two cases:
\begin{itemize}
\item If $\theta_{11} \theta^{11} - 1 = 0$, since $\theta_{1k}
\theta^{1k} = 1$, one has $\theta_{12} \theta^{12} + \theta_{13}
\theta^{13} = 0$. In this case, $\theta_{12} = \theta_{13} = 0$, and
therefore
\begin{displaymath}
\theta^{12} = \frac{\theta_{13} \theta_{23} - \theta_{12} \theta_{33}}
  {\mathrm{det} (\bTheta)} = 0, \quad
\theta^{13} = \frac{\theta_{12} \theta_{23} - \theta_{13} \theta_{22}}
  {\mathrm{det} (\bTheta)} = 0.
\end{displaymath}
Thus \eqref{eq:cond} becomes $\theta^{22} s_2^2 + 2\theta^{23} s_2 s_3
+ \theta^{33} s_3^2 = \mathfrak{D}$, which is equivalent to $\tilde{\bs}^T
\bTheta^{-1} \tilde{\bs} = \mathfrak{D}$ for $\tilde{\bs} = (0, s_2, s_3)^T$.
Since $\bTheta$ is positive definite, $\bTheta^{-1}$ is also positive
definite. This shows that $\mathfrak{D} \geqslant 0$, and the equality
holds if and only if $s_2 = s_3 = 0$. In the case of $\mathfrak{D} =
s_2 = s_3 = \theta_{12} = \theta_{13} = 0$, the relation
\eqref{eq:equivalence} holds with $k = s_1 / \theta_{11}$.
\item If $\theta_{11} \theta^{11} - 1 \neq 0$, then $\theta_{11}
\theta^{11} - 1 = -(\theta_{12} \theta^{12} + \theta_{13} \theta^{13})
> 0$. In this case, \eqref{eq:cond} is a quadratic equation of $s_1$,
whose discriminant is
\begin{displaymath}
\begin{split}
\Delta &= [2\theta_{11} (\theta^{12} s_2 + \theta^{13} s_3)]^2 -
  4\theta_{11} (\theta_{11} \theta^{11} - 1)
    (\theta^{22} s_2^2 + 2\theta^{23} s_2 s_3 + \theta^{33} s_3^2 -
     \mathfrak{D}) \\
&= -4 \theta_{11} (\theta_{12} s_3 - \theta_{13} s_2)^2 /
  \mathrm{det} (\bTheta) +
  4 \theta_{11} \mathfrak{D} (\theta_{11} \theta^{11} - 1).
\end{split}
\end{displaymath}
In order that $s_1$ is real, $\mathfrak{D} \geqslant 0$ must hold. And
if $\mathfrak{D} = 0$, $\theta_{12} s_3 - \theta_{13} s_2$ must be
zero. Thus when $\mathfrak{D}$ is zero, there exist a constant $k$
such that
\begin{equation} \label{eq:s_23}
s_2 = k \theta_{12}, \qquad s_3 = k \theta_{13}.
\end{equation}
Substitute \eqref{eq:s_23} and $\mathfrak{D} = 0$ into
\eqref{eq:cond}, it can be solved that $s_1 = k \theta_{11}$.
\end{itemize}
In both cases, \eqref{eq:ineq} holds, and it has been demonstrated
that if $\theta_{11}^{-1} s_1^2 = \bs^T \bTheta^{-1} \bs$, then
\eqref{eq:equivalence} holds. It only remains to prove that
$\theta_{11}^{-1} s_1^2 = \bs^T \bTheta^{-1} \bs$ is a conclusion of
\eqref{eq:equivalence}.

If \eqref{eq:equivalence} holds, then
\begin{displaymath}
(k, 0, 0) \bTheta = k(\theta_{11}, \theta_{12}, \theta_{13}) = \bs^T.
\end{displaymath}
Therefore $\bs^T \bTheta^{-1} = (k, 0, 0)$, and then
\begin{equation}
\bs^T \bTheta^{-1} \bs = (k, 0, 0) \bs
  = k^2 \theta_{11} = \theta_{11}^{-1} s_1^2.
\end{equation}
This completes the proof of the lemma.
\end{proof}

Now we claim that the modified 13-moment system \eqref{eq:new_13m} is
locally hyperbolic around the equilibrium. Precisely, we have the
following major theorem of this section:
\begin{theorem} \label{prop:1D_hyp}
There exists a positive constant $\delta > 0$, such that if $\rho^{-2}
\bs^T \bTheta^{-1} \bs < \delta$, $\tilde{\bf M}_1(\tilde{\bw})$ is
real diagonalizable.
\end{theorem}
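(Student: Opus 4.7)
I would treat Theorem \ref{prop:1D_hyp} as a perturbation statement about the reference state $\bs = \mathbf{0}$, where $\tilde f|_{13}$ degenerates to the pure anisotropic Gaussian $f_G$. After using the rotational invariance in the $(x_2,x_3)$-plane to set $\theta_{23} = 0$, I would non-dimensionalize by introducing the wave speed $\hat\lambda = (\lambda - u_1)/\sqrt{\theta_{11}}$, so that $\det(\lambda\mathbf{I} - \tilde{\bf M}_1)$ becomes, up to a power of $\sqrt{\theta_{11}}$, a polynomial in $\hat\lambda$ whose coefficients depend only on the dimensionless ratios $\theta_{ij}/\theta_{11}$ and $s_i/(\rho\theta_{11}^{1/2})$.

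\textbf{Reference state.} At $\bs = \mathbf{0}$ I would compute the characteristic polynomial with a computer algebra system, as in Section \ref{sec:Grad13_3D}, and expect a factorization of the shape $(\lambda - u_1)^a\, p_1(\hat\lambda)\, p_2(\hat\lambda)$ with $p_1, p_2$ of low degree and only real roots. Since the isotropic limit $\bTheta = \theta\mathbf{I}$ recovers the Maxwellian case analyzed in Section \ref{sec:Grad13_1D}, repeated characteristic roots are unavoidable there. I would handle these by exhibiting a polynomial $q$ obtained by collapsing the repeated factors, verifying $q(\tilde{\bf M}_1) = \mathbf{0}$ by direct substitution, and concluding from Lemma \ref{thm:diag} that $\tilde{\bf M}_1$ is diagonalizable at every anisotropic Gaussian state.

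\textbf{Perturbation in $\bs$.} The crucial tool is Lemma \ref{lem:equiv} together with its counterparts obtained by rotation, which bound each ratio $s_i^2/(\rho^2\theta_{ii})$ by $\eta := \rho^{-2}\bs^T\bTheta^{-1}\bs$. Consequently every dimensionless coefficient of the characteristic polynomial differs from its $\bs = \mathbf{0}$ value by a quantity of order $\sqrt{\eta}$, uniformly in $\bTheta$ and $\rho$. Writing $\det(\lambda\mathbf{I} - \tilde{\bf M}_1) = (\lambda - u_1)^a\,\tilde p(\hat\lambda)$ and invoking Lemma \ref{lem:multi_root}, I would express the discriminant of $\tilde p$ as the Sylvester resultant of Lemma \ref{lem:coprime}. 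This resultant is a polynomial in the dimensionless coefficients whose value at $\bs = \mathbf{0}$ can be checked to be bounded away from zero uniformly in $\bTheta$. By continuity, there is a universal $\delta > 0$ such that for $\eta < \delta$ this resultant remains nonzero, so $\tilde p$ has simple roots; combined with reality at $\bs = \mathbf{0}$, these roots stay on the real axis, giving $n$ distinct real eigenvalues and hence diagonalizability.

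\textbf{Main obstacle.} The delicate point is that mere continuity of a discriminant near a zero does not prevent a double real root from bifurcating into a complex conjugate pair. To avoid this I expect that once the obvious invariant-subspace factor $(\lambda - u_1)^a$ is stripped off, the reduced polynomial $\tilde p(\hat\lambda)$ already has simple roots at $\bs = \mathbf{0}$ for every positive-definite $\bTheta$, so the ensuing perturbation takes place at simple roots, where the implicit function theorem preserves reality. A secondary difficulty is keeping the threshold $\delta$ independent of $\bTheta$: this is precisely what the dimensionless formulation and Lemma \ref{lem:equiv} are designed to deliver, since they reduce the perturbation to bounded quantities on a compact parameter set.
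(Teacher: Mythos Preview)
Your outline has the right overall shape (nondimensionalize, analyze the reference state $\bs=0$, perturb), but it contains two genuine gaps that the paper's proof is specifically engineered to close.

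\textbf{First gap: the $\bTheta$-uniformity is not free.} You claim that after scaling by $\sqrt{\theta_{11}}$ the discriminant of the reduced polynomial is ``bounded away from zero uniformly in $\bTheta$'' and speak of a ``compact parameter set''. But the positive-definite cone is not compact, and Lemma~\ref{lem:equiv} controls only $s_1^2/\theta_{11}$, not the ratios $\theta_{ij}/\theta_{11}$. The paper avoids this entirely by a calculation you did not anticipate: after setting $\zeta=(\lambda-u_1)/\sqrt{\theta_{11}}$, the characteristic polynomial of $\tilde{\bf M}_1$ depends on the data \emph{only} through the two scalars $\eta_1=\rho^{-2}\theta_{11}^{-1}s_1^2$ and $\eta_2=\rho^{-2}\bs^T\bTheta^{-1}\bs$; all other $\theta_{ij}$ drop out (equation~\eqref{eq:cp}). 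This algebraic miracle is what makes the threshold $\delta$ independent of $\bTheta$, and it must be established by direct computation rather than assumed.

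\textbf{Second gap: the reference polynomial does not have simple roots.} You hope that once $(\lambda-u_1)^a$ is removed, the remaining factor $\tilde p$ has simple roots at $\bs=0$. It does not: at $\eta_1=\eta_2=0$ the characteristic polynomial is $\zeta^5(5\zeta^2-7)^2(5\zeta^4-26\zeta^2+15)$, so $\zeta=\pm\sqrt{7/5}$ are still double roots, and near $\zeta=0$ the multiplicity is five. A naive perturbation argument therefore cannot rule out complex bifurcation. The paper handles this by first obtaining an explicit factorization $p=(p_{11}+\eta_2 p_{12})(p_{21}+\eta_2 p_{22})$ in which each of the repeated roots at $\bs=0$ is split between the two factors as a \emph{simple} root, and then (i) checking by explicit sign evaluations at $\zeta=\pm\sqrt{\eta_1}$, $\zeta=0$, etc.\ that the roots of each factor near $0$ stay real and separated for small $\eta_1,\eta_2$, (ii) computing the resultant of the two factors to show they acquire no common root, and (iii) in the degenerate cases $\eta_1=0$ and $\eta_1=\eta_2$ verifying directly that a lower-degree polynomial annihilates $\tilde{\bf M}_1$. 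Your discriminant-continuity argument is too coarse to replace this case analysis.
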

\begin{proof}
Let
\begin{displaymath}
\eta_1 := \rho^{-2} \theta_{11}^{-1} s_1^2, \quad
\eta_2 := \rho^{-2} \bs^T \bTheta^{-1} \bs, \quad
\zeta = \frac{\lambda - u_1}{\sqrt{\theta_{11}}}.
\end{displaymath}
According to Lemma \ref{lem:equiv}, we have $\eta_1 \leqslant \eta_2
< \delta$. By direct calculation, the characteristic polynomial of
$\tilde{\bf M}_1(\bw)$ is
\begin{equation} \label{eq:cp}
p(\lambda) := \mathrm{det}(\lambda {\bf I} - \tilde{\bf M}_1) =
  \frac{(\sqrt{\theta_{11}})^{13}}{1953125} 
    [p_{11}(\zeta) + \eta_2 p_{12} (\zeta)]
    [p_{21}(\zeta) + \eta_2 p_{22} (\zeta)],
\end{equation}
where
\begin{equation} \label{eq:p}
\begin{aligned}
p_{11}(\zeta) &= 25 \zeta^2 (5\zeta^2 - 7)
  - 130 \sqrt{\eta_1} \zeta^3 + 4 \eta_1 (6\zeta^2 + 7), \qquad
p_{12}(\zeta) = 8 \zeta^2, \\
p_{21}(\zeta) &= 625 \zeta^3
  (25 \zeta^6 - 165 \zeta^4 + 257 \zeta^2 - 105) \\
& \quad -250 \eta_1^{1/2} \zeta^2
  (110 \zeta^6 - 311 \zeta^4 + 144 \zeta^2 - 105) \\
& \quad +100 \eta_1 \zeta
  (111 \zeta^6 + 447 \zeta^4 - 209 \zeta^2 + 105) \\
& \quad -40 \eta_1^{3/2}
  (18 \zeta^6 + 697 \zeta^4 + 282 \zeta^2 + 105) \\
& \quad + 96 \eta_1^2 \zeta (16 \zeta^2 + 63), \\
p_{22}(\zeta) &= 8 \zeta \big[
  25 \zeta^2 (23 \zeta^4 - 73 \zeta^2 + 3)
  -10 \sqrt{\eta_1} \zeta (27\zeta^4 + 67\zeta^2 - 18)
  + 12 \eta_1 (48 \zeta^2 - 7) \big].
\end{aligned}
\end{equation}
Below we divide the proof into three cases.

\subparagraph{First case: $\boldsymbol{s_1 = s_2 = s_3 = 0}$.}
In this case, $\eta_1 = \eta_2 = 0$, and
\begin{displaymath}
p(\lambda) = \frac{(\sqrt{\theta_{11}})^{13}}{125} \zeta^5
  (5\zeta^2 - 7)^2 (5\zeta^4 - 26 \zeta^2 + 15).
\end{displaymath}
Obviously the roots of $p(\lambda)$ are all real.  According to
Lemma \ref{thm:diag}, we only need to prove $\hat{p}(\tilde{\bf
M}_1) = {\bf 0}$ for
\begin{displaymath}
\hat{p}(\lambda) = (\sqrt{\theta_{11}})^{13} \zeta
  (5\zeta^2 - 7)(5\zeta^4 - 26 \zeta^2 + 15).
\end{displaymath}
This can be verified by direct calculation.

\subparagraph{Second case: $\boldsymbol{s_1 = 0}$ and
$\boldsymbol{s_2^2 + s_3^2 > 0}$.}
In this case, $\eta_1 = 0$, while the SPD property of $\bTheta$ gives
$\eta_2 > 0$. The characteristic polynomial $p(\lambda)$ can be
simplified as
\begin{displaymath}
p(\lambda) = \frac{\zeta^5}{78125} p_1(\zeta) p_2(\zeta),
\end{displaymath}
where
\begin{displaymath}
\begin{split}
p_1(\zeta) &= 25 (5\zeta^2 - 7) + 8 \eta_2, \\
p_2(\zeta) &= 25 (5\zeta^2 - 7) (5\zeta^4 - 26 \zeta^2 + 15)
  + 8 \eta_2 (23 \zeta^4 - 73 \zeta^2 + 3).
\end{split}
\end{displaymath}
When $\eta_2$ equals zero, all the roots of $p_1$ and $p_2$ are
single and nonzero. Thus, when $\eta_2 < \delta$ for $\delta$ small
enough, the roots of $p_1$ and $p_2$ are also single and nonzero.
Furthermore, we claim that $p_1(\zeta)$ and $p_2(\zeta)$ have no
common roots when $\delta$ is small enough. This can be proven
following these steps:
\begin{enumerate}
\item Let $\tilde{p}_1(z) = p_1(\sqrt{z})$, $\tilde{p}_2(z) =
p_2(\sqrt{z})$. Obviously, when $\delta$ is small enough,
$\tilde{p}_1$ and $\tilde{p}_2$ are polynomials with all their roots
positive. If $\tilde{p}_1$ and $\tilde{p}_2$ have no common roots,
then $p_1$ and $p_2$ have no common roots.
\item The polynomial $\tilde{p}_1(z)$ is a linear function, and its
only root is $(175 - 8\eta_2) / 125$. The value of $\tilde{p}_2$ at
this point is
\begin{equation} \label{eq:remainder}
\tilde{p}_2 \left( \frac{175 - 8 \eta_2}{125} \right) =
  \frac{8\eta_2 (1152 \eta_2^2 - 3400 \eta_2 - 664375)}{15625}.
\end{equation}
\item Since $0 < \eta_2 < \delta$, the value of \eqref{eq:remainder}
is negative if $\delta < (425 + 125\sqrt{3073}) / 288$, which means
$\tilde{p}_1$ and $\tilde{p}_2$ have no common roots.
\end{enumerate}
The above analysis shows when $\delta$ is small, $p_1(\zeta)
p_2(\zeta)$ has no multiple roots, and $\zeta = 0$ is not a root of
$p_1(\zeta) p_2(\zeta)$. Thus, the polynomial
\begin{equation} \label{eq:q}
q(\zeta) := \zeta p_1(\zeta) p_2(\zeta)
\end{equation}
has no multiple roots if $\delta$ is small. Finally, it is verified by
computer algebra system
\begin{equation} \label{eq:q_M}
q\left( \frac{\tilde{\bf M}_1-u_1 {\bf I}}{\sqrt{\theta_{11}}} \right)
  = {\bf 0}, \qquad \text{if} \quad s_1 = 0.
\end{equation}
Hence, according the Lemma \ref{thm:diag}, the matrix $\tilde{\bf
M}_1$ is diagonalizable.

\subparagraph{Third case: $\boldsymbol{s_1 \neq 0}$.}
In this case, $\eta_1 > 0$ and $\eta_2 > 0$. We first prove when
$\delta$ is small, both $p_{11}(\zeta) + \eta_2 p_{12}(\zeta)$ and
$p_{21}(\zeta) + \eta_2 p_{22}(\zeta)$ have no multiple or imaginary
roots. When $\eta_1 = \eta_2 = 0$,
\begin{subequations}
\begin{align}
\label{eq:p1}
p_{11}(\zeta) + \eta_2 p_{12}(\zeta) &= 25\zeta^2 (5\zeta^2 - 7), \\
\label{eq:p2}
p_{21}(\zeta) + \eta_2 p_{22}(\zeta) &=
  625 \zeta^3 (5\zeta^2 - 7) (5\zeta^4 - 26\zeta^2 + 15).
\end{align}
\end{subequations}
Both polynomials have only real roots, and both of them have only one
multiple roots --- $\zeta = 0$. Thus, for small $\delta$, if $\eta_1$
and $\eta_2$ are nonzero, then the multiple or imaginary roots must be
around $\zeta = 0$ if they exist. For $p_{11}(\zeta) + \eta_2
p_{12}(\zeta)$, when $\delta$ is small, one can obtain its values at
some particular points around $\zeta = 0$:
\begin{itemize}
\item $\zeta = -\sqrt{\eta_1}$: $\quad p_{11}(-\sqrt{\eta_1}) + \eta_2
  p_{12}(-\sqrt{\eta_1}) = 279 \eta_1^2 - 147 \eta_1 + 8 \eta_1 \eta_2
  < 0$,
\item $\zeta = 0$: $\quad p_{11}(0) + \eta_2 p_{12}(0) = 28 \eta_1 > 0$,
\item $\zeta = \sqrt{\eta_1}$: $\quad p_{11}(\sqrt{\eta_1}) + \eta_2
  p_{12}(\sqrt{\eta_1}) = 19 \eta_1^2 - 147 \eta_1 + 8 \eta_1 \eta_2 <
  0$.
\end{itemize}
This tells us that there are two distinct real roots of
$p_{11}(\zeta) + \eta_2 p_{12}(\zeta)$ around $\zeta = 0$. Noting that
$\zeta = 0$ is a root of multiplicity $2$ of \eqref{eq:p1}, we
conclude that in the case of $0 < \eta_1 < \delta$ and $0 < \eta_2 <
\delta$, $p_{11}(\zeta) + \eta_2 p_{12}(\zeta)$ has no multiple or
imaginary roots. Similarly, for $p_{11}(\zeta) + \eta_2
p_{12}(\zeta)$, when $\delta$ is small, one has
\begin{itemize}
\item $\zeta = -\sqrt{\eta_1}$:
\begin{displaymath}
\begin{split}
p_{21}(-\sqrt{\eta_1}) + \eta_2 p_{22}(-\sqrt{\eta_1}) &=
-\eta_1^{3/2}
  [54945 \eta_1^3 - (106759 - 6760\eta_2) \eta_1^2 \\
& \quad +
  (193053 - 4632\eta_2) \eta_1 - (77175 + 1512\eta_2)] > 0,
\end{split}
\end{displaymath}
\item $\zeta = 0$:
\begin{displaymath}
p_{21}(0) + \eta_2 p_{22}(0) = -4200 \eta_1^{3/2} < 0,
\end{displaymath}
\item $\zeta = \frac{2}{5} \eta_1^{1/2} - \frac{4}{375} \eta_1^{3/2}$:
\begin{displaymath}
\begin{split}
& p_{21} \left( \frac{2}{5} \eta_1^{1/2} -
  \frac{4}{375} \eta_1^{3/2} \right)
+ \eta_2 p_{22} \left( \frac{2}{5} \eta_1^{1/2} -
  \frac{4}{375} \eta_1^{3/2} \right) \\
={} & \frac{64 \eta_1^{7/2}}{9385585784912109375}
  \big[ -4096\eta_1^{10} + 706560\eta_1^9 - 30182400\eta_1^8 \\
& \quad - 57600 (29025 + 184 \eta_2) \eta_1^7 +
  4320000 (48475 + 536 \eta_2) \eta_1^6 \\
& \quad - 486000000(18575 + 428 \eta_2) \eta_1^5 +
  506250000(539275 + 19784 \eta_2) \eta_1^4 \\
& \quad - 18984375000(412025 + 16176 \eta_2)\eta_1^3 +
  711914062500(209175 + 10576 \eta_2) \eta_1^2 \\
& \quad - 40045166015625(26225 + 3704 \eta_2) \eta_1 +
  3003387451171875(175 + 484 \eta_2) \big] > 0,
\end{split}
\end{displaymath}
\item $\zeta = \sqrt{\eta_1}$:
\begin{displaymath}
\begin{split}
p_{21}(\sqrt{\eta_1}) + \eta_2 p_{22}(\sqrt{\eta_1}) &=
-\eta_1^{3/2}
  [1495 \eta_1^3 + (7019 - 2440\eta_2) \eta_1^2 \\
& \quad -
  (98493 - 15352\eta_2) \eta_1 + (33075 + 1368\eta_2)] < 0.
\end{split}
\end{displaymath}
\end{itemize}
This reveals that there are three distinct real roots of
$p_{21}(\zeta) + \eta_2 p_{22}(\zeta)$ around $\zeta = 0$. Until now,
the statement at the beginning of this paragraph has been proven.

The subsequent proof is divided into two parts:
\begin{enumerate}
\item If $\eta_1 = \eta_2$, then $p_{11}(\zeta) + \eta_2
p_{12}(\zeta)$ is a factor of $p_{21}(\zeta) + \eta_2 p_{22}(\zeta)$,
and we actually have
\begin{align}
p_{11}(\zeta) + \eta_2 p_{12}(\zeta) &=
  (25\zeta^3 - 16 \sqrt{\eta_1} \zeta^2 - 35 \zeta - 14 \sqrt{\eta_1})
  (5 \zeta - 2\sqrt{\eta_1}), \\
\label{eq:p_2}
\begin{split}
p_{21}(\zeta) + \eta_2 p_{22}(\zeta) &=
  [25 \zeta (\zeta^4 - 26 \zeta^2 + 15) +
   30 \sqrt{\eta_1} (3\zeta^4 + 6\zeta^2 + 5) - 192 \eta_1 \zeta]
  \times {} \\
& \qquad (25\zeta^3 - 16 \sqrt{\eta_1} \zeta^2
  - 35 \zeta - 14 \sqrt{\eta_1}) (5 \zeta - 2\sqrt{\eta_1}).
\end{split}
\end{align}
Thus we need to verify
\begin{equation} \label{eq:p2_M1}
p_{21} \left(
  \frac{\tilde{\bf M}_1 - u_1 {\bf I}}{\sqrt{\theta_{11}}}
\right) + \eta_2 p_{22} \left(
  \frac{\tilde{\bf M}_1 - u_1 {\bf I}}{\sqrt{\theta_{11}}}
\right) = {\bf 0}.
\end{equation}
According to Lemma \ref{lem:equiv}, the condition $\eta_1 = \eta_2$ is
equivalent to \eqref{eq:equivalence}. Substitute
\eqref{eq:equivalence} into the expression of $\tilde{\bf M}_1$, and
\eqref{eq:p2_M1} then can be directly verified.
\item If $\eta_1 \neq \eta_2$, the resultant of $p_{11} + \eta_2
p_{12}$ and $p_{21} + \eta_2 p_{22}$ is calculated as
\begin{equation} \label{eq:resultant}
\mathrm{res}(p_{11} + \eta_2 p_{12}, p_{21} + \eta_2 p_{22}) =
  -1003520000000000 \eta_1^3 (\eta_1-\eta_2)^5 r(\eta_1, \eta_2),
\end{equation}
where $r(\eta_1, \eta_2)$ is
\begin{displaymath}
\begin{split}
r(\eta_1,\eta_2) = {} & 6519382474752 \eta_1^5 +
  7205633261568 \eta_2 \eta_1^4 -
  1047028571136000 \eta_1^4 + {}\\
& 2877437509632 \eta_2^2 \eta_1^3 +
  71846341632000 \eta_2 \eta_1^3 +
  6117273120960000 \eta_1^3 + {}\\
& 488268103680 \eta_2^3 \eta_1^2 +
  14075065958400 \eta_2^2 \eta_1^2 -
  32261927040000 \eta_2 \eta_1^2 - {}\\
& 12991498038500000 \eta_1^2 +
  31436439552 \eta_2^4 \eta_1 +
  74226585600 \eta _2^3 \eta_1 - {}\\
& 29723348160000 \eta_2^2 \eta_1 -
  84800409000000 \eta_2 \eta_1 +
  12363509395312500 \eta_1 + {}\\
& 668860416 \eta_2^5 - 13801881600 \eta_2^4 -
  707492160000 \eta_2^3 + {}\\
& 13556709000000 \eta_2^2 +
  188918353125000 \eta_2-3277351494140625.
\end{split}
\end{displaymath}
Evidently when $\delta$ is small, $r(\eta_1, \eta_2) < 0$. Noting that
$\eta_1 > 0$ and $\eta_1 \neq \eta_2$, we conclude
\eqref{eq:resultant} is nonzero. According to Lemma \ref{lem:coprime},
$p_{11}(\zeta) + \eta_2 p_{12}(\zeta)$ and $p_{21}(\zeta) + \eta_2
p_{22}(\zeta)$ have no common roots. Thus, the characteristic
polynomial $p(\lambda)$ has no multiple roots, which gives us the
diagonalizability of $\tilde{\bf M}_1$.
\end{enumerate}

\subparagraph{Final conclusion.}
For all the three cases listed above, it has been proven that when
$\delta$ is small, all the eigenvalues of $\tilde{\bf M}_1$ are real,
and the matrix $\tilde{\bf M}_1$ is diagonalizable. Thus the proof of
Theorem \ref{prop:1D_hyp} is completed.
\end{proof}

\begin{theorem} \label{thm:hyp_3D}
There exists a positive constant $\delta > 0$, such that if $\rho^{-2}
\bs^T \bTheta^{-1} \bs < \delta$, the moment system \eqref{eq:new_13m}
is hyperbolic.
\end{theorem}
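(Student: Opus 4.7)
The plan is to deduce the full 3D hyperbolicity statement from Theorem \ref{prop:1D_hyp} by exploiting the rotational invariance of the modified moment system, which was noted right after \eqref{eq:new_13m_mat}. Hyperbolicity of \eqref{eq:new_13m} means that for every unit vector $\bn = (n_1, n_2, n_3)^T$, the symbol matrix $\tilde{\bf M}(\tilde{\bw}; \bn) := n_k \tilde{\bf M}_k(\tilde{\bw})$ is real diagonalizable. My goal is to show that this reduces to the $\bn = \be_1$ case already handled in Theorem \ref{prop:1D_hyp}.

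First I would set up the rotation explicitly. Given $\bn$, choose an orthogonal matrix $\bQ \in \mathrm{SO}(3)$ with $\bQ \be_1 = \bn$, and let $\mathcal{T}(\bQ)$ denote the induced linear action on the 13-vector $\tilde{\bw}$, defined entrywise by transforming $\bu$ and $\bs$ as vectors and $\bTheta$ as a symmetric 2-tensor under $\bQ$ (so that $\tilde{\bw} \mapsto \mathcal{T}(\bQ) \tilde{\bw}$ corresponds to rotating the underlying phase density $\tilde{f}|_{13}$ by $\bQ^T$). Rotational invariance of the system means exactly that
\begin{equation}
\tilde{\bf M}(\tilde{\bw}; \bn) = \mathcal{T}(\bQ)^{-1}\, \tilde{\bf M}_1\!\left(\mathcal{T}(\bQ)\tilde{\bw}\right)\, \mathcal{T}(\bQ).
\end{equation}
Hence $\tilde{\bf M}(\tilde{\bw}; \bn)$ is real diagonalizable if and only if $\tilde{\bf M}_1(\tilde{\bw}')$ is, where $\tilde{\bw}' := \mathcal{T}(\bQ) \tilde{\bw}$ is the rotated state.

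Next I would verify that the smallness assumption is invariant under $\mathcal{T}(\bQ)$. Under the action described, $\bs$ and $\bTheta$ transform by $\bs \mapsto \bQ \bs$ and $\bTheta \mapsto \bQ \bTheta \bQ^T$, while $\rho$ is a scalar and is unchanged. Therefore
\begin{equation}
(\rho')^{-2} (\bs')^T (\bTheta')^{-1} \bs'
= \rho^{-2} (\bQ \bs)^T (\bQ \bTheta \bQ^T)^{-1} (\bQ \bs)
= \rho^{-2} \bs^T \bTheta^{-1} \bs,
\end{equation}
so the hypothesis $\rho^{-2} \bs^T \bTheta^{-1} \bs < \delta$ passes to the rotated state $\tilde{\bw}'$ with the same $\delta$ furnished by Theorem \ref{prop:1D_hyp}. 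Applying that theorem to $\tilde{\bw}'$ gives real diagonalizability of $\tilde{\bf M}_1(\tilde{\bw}')$, and the similarity above transfers this to $\tilde{\bf M}(\tilde{\bw}; \bn)$. Since $\bn$ was arbitrary, the system is hyperbolic.

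The only genuine work is bookkeeping: confirming that the specific ordering of components in $\tilde{\bw}$ is indeed compatible with a well-defined linear action $\mathcal{T}(\bQ)$ realizing rotational invariance of the quasilinear form, and checking that the scalar $\rho^{-2} \bs^T \bTheta^{-1} \bs$ really is the pullback of an invariant of the phase density (which it is, being $\rho^{-2}$ times a contraction of $\bs \otimes \bs$ with $\bTheta^{-1}$). Both are immediate consequences of how \eqref{eq:new_13m} was derived from the ansatz \eqref{eq:tilde_f_13}--\eqref{eq:f_G}. No additional estimates on $\delta$ are needed beyond those already obtained in the proof of Theorem \ref{prop:1D_hyp}, so the same threshold works in 3D.
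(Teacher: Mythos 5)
Your proposal is correct and follows essentially the same route as the paper: reduce $n_k \tilde{\bf M}_k$ to $\tilde{\bf M}_1$ by the rotational-invariance similarity, verify that $\rho^{-2}\bs^T\bTheta^{-1}\bs$ is a rotational invariant, and invoke Theorem~\ref{prop:1D_hyp}. The only discrepancy is a harmless transpose convention: with $\bQ\be_1 = \bn$ the induced moment transformation aligning $\bn$ with $\be_1$ should read $\bs \mapsto \bQ^T\bs$, $\bTheta \mapsto \bQ^T\bTheta\bQ$ (matching the paper's $\bf G$ with first row $\bn^T$), but since the invariance computation is indifferent to this choice, the argument stands.
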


\begin{proof}
The hyperbolicity of the moment system \eqref{eq:new_13m} is
equivalent to the diagonalizability of the matrix $n_k \tilde{\bf M}_k
(\tilde{\bw})$ for all unit vectors $\bn = (n_1, n_2, n_3)^T \in
\mathbb{R}^3$. The rotational invariance of \eqref{eq:new_13m} implies
that for any unit vector $\bn$, there exists a constant square matrix
$\bf R$ such that
\begin{displaymath}
n_k \tilde{\bf M}_k(\tilde{\bw}) =
  {\bf R}^{-1} \tilde{\bf M}_1({\bf R} \tilde{\bw}) {\bf R}.
\end{displaymath}
Actually, $\bf R$ can be constructed as follows:
\begin{enumerate}
\item Construct an orthogonal matrix ${\bf G} = (g_{ij})_{3\times 3}$
such that the first row of $\bf G$ is $(n_1, n_2, n_3)$.
\item Define the ``rotated moments'' $\tilde{\bw}'$ as
\begin{displaymath}
\tilde{\bw}' = (\rho', u_1', u_2', u_3',
  \theta_{11}', \theta_{22}', \theta_{33}',
  \theta_{12}', \theta_{13}', \theta_{23}',
  s_1', s_2', s_3')^T,
\end{displaymath}
where
\begin{equation} \label{eq:rotation}
\rho' = \rho, \quad u_i' = g_{ij} u_j, \quad
\theta_{ij}' = g_{ik} g_{jl} \theta_{kl}, \quad
s_i' = g_{ij} s_j.
\end{equation}
\item The matrix $\bf R$ is the unique matrix such that $\tilde{\bw}'
= {\bf R} \tilde{\bw}$ for all $\tilde{\bw}$.
\end{enumerate}
According to Theorem \ref{prop:1D_hyp}, there exists a constant
positive number $\delta$ such that the matrix $n_k \tilde{\bf
M}_k(\tilde{\bw})$ is diagonalizable if $\rho'^{-2} \bs'^T
(\bTheta')^{-1} \bs' < \delta$, where $\bTheta' =
(\theta_{ij}')_{3\times 3}$. Using \eqref{eq:rotation}, we have
\begin{displaymath}
\rho'^{-2} \bs'^T (\bTheta')^{-1} \bs'
  = \rho^{-2} ({\bf G}\bs)^T
    ({\bf G} {\bf\Theta} {\bf G}^T)^{-1} ({\bf G}\bs)
  = \rho^{-2} \bs^T {\bf \Theta}^{-1} \bs.
\end{displaymath}
Thus the theorem is proven.
\end{proof}

\subsection{Quantification of the hyperbolicity region}

The proof of Theorem \ref{thm:hyp_3D} reveals that the maximal value
of $\delta$ (denoted by $\delta_{\max}$ below) in Theorem
\ref{thm:hyp_3D} equals that in Theorem \ref{prop:1D_hyp}. Below we
give a rough estimation of $\delta_{\max}$. Let \[ \tilde{p} =
(p_{11}+\eta_2 p_{12})(p_{21}+\eta_2 p_{22}),\] where $p_{11}, p_{12},
p_{21}, p_{22}$ are defined in \eqref{eq:p}. We denote the domain on
which the polynomial $\tilde{p}$ has no imaginary roots to be
$\Sigma$, and thus
\[ 
\Sigma = \{ (\eta_1, \eta_2) \mid \mathfrak{I}(\eta_1, \eta_2) = 0 \},
\]
where
\[
\mathfrak{I}(\eta_1, \eta_2) :=
  \max \{ |\mathrm{Im}(z)| \mid \text{$z$ is the root of $\tilde{p}$}
\}, \qquad 0 \leqslant \eta_1 \leqslant \eta_2.
\]
Since $\mathfrak{I}$ is continuous, $\Sigma$ has to be a closed
region. We plot the domain $\Sigma$ as the green area in Figure
\ref{fig:Sigma}. The horizontal line $\eta_2 = \tilde{\delta}$ is
tangent to the red curve. We have $\delta_{\max} \leqslant
\tilde{\delta}$ and $\tilde{\delta} \approx 0.095$.

We denote the domain $\mathcal{S}$ to be the domain on which the
polynomial $\tilde{p}$ has multiple roots. According to Lemma
\ref{lem:multi_root} and Lemma \ref{lem:coprime}, we have that
\[
\mathcal{S} =
  \{ (\eta_1, \eta_2) \mid \mathfrak{R}(\eta_1, \eta_2) = 0 \},
\]
and
\[
\mathfrak{R}(\eta_1, \eta_2) := \mathrm{res}(\tilde{p},
\tilde{p}'), \quad \tilde{p}'(\zeta) = \odd{\zeta} \tilde{p}(\zeta),
\qquad 0 \leqslant \eta_1 \leqslant \eta_2.
\]

Due to the continuity of the roots of polynomials with respect to its
coefficients, we have $\partial \Sigma \subset \mathcal{S}$. Figure
\ref{fig:zls} shows part of $\mathcal{S}$. Comparing Figure
\ref{fig:Sigma} and Figure \ref{fig:zls}, we conclude that if $0 <
\eta_1 < \eta_2 < \tilde{\delta}$, which implies that $(\eta_1,
\eta_2)$ is an interior point of $\Sigma$ below the line $\eta_2 =
\tilde{\delta}$, then $\tilde{\bf M}_1(\tilde{\bw})$ is real
diagonalizable.
\begin{figure}[!ht]
\centering
\subfigure[The region $\Sigma$]{%
\label{fig:Sigma}
\begin{overpic}[width=.43\textwidth]{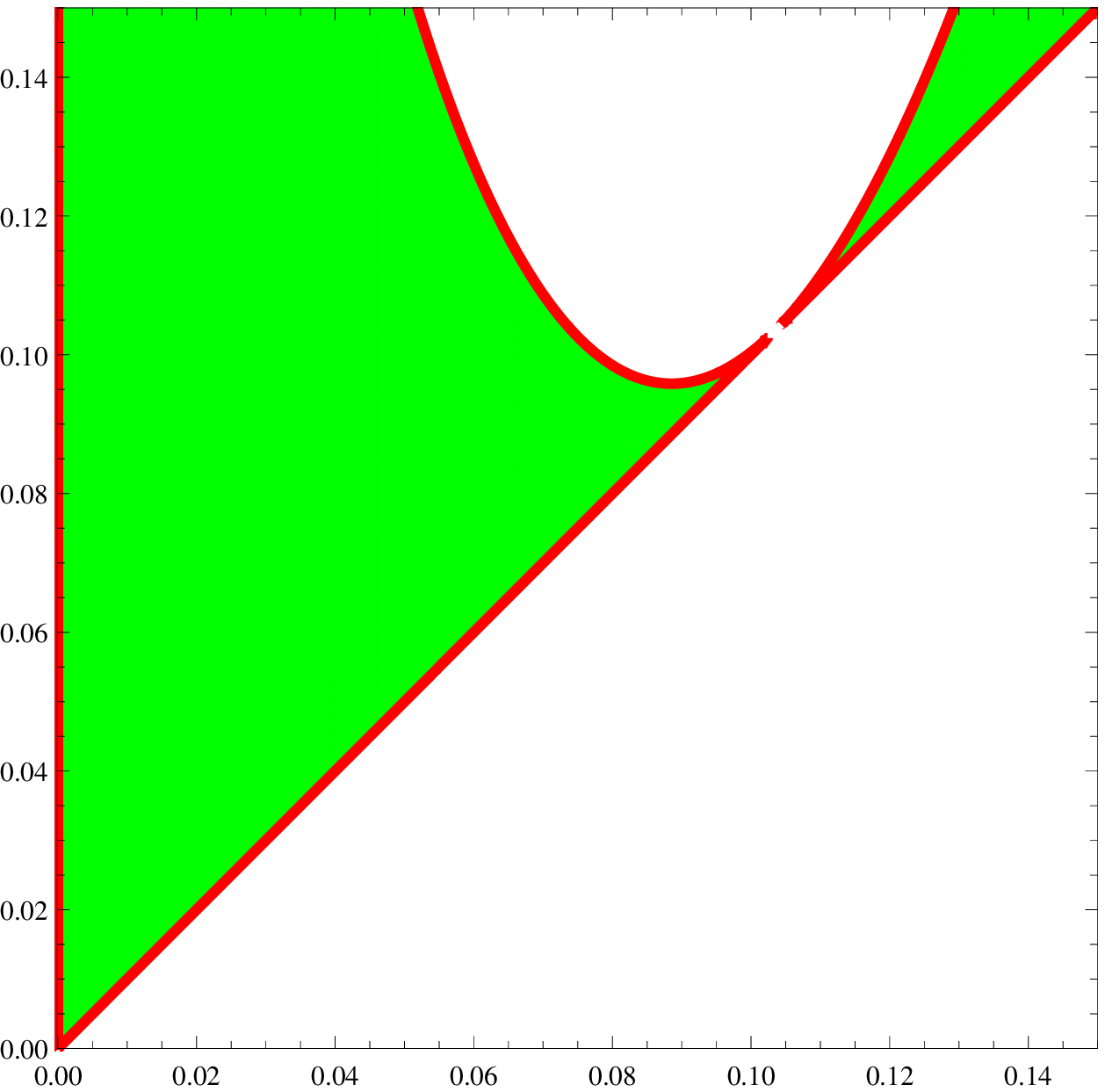}
\linethickness{0.3mm}
\multiput(5.2,64.3)(2,0){48}{\line(1,0){1}}
\put(82,57.5){\scalebox{0.8}{$\eta_2 = \tilde{\delta}$}}
\put(100,0){$\eta_1$}
\put(0,100){$\eta_2$}
\end{overpic}}
\hspace{10pt}
\subfigure[The zero level set of $\mathfrak{R}$]{%
\label{fig:zls}
\begin{overpic}[width=.43\textwidth]{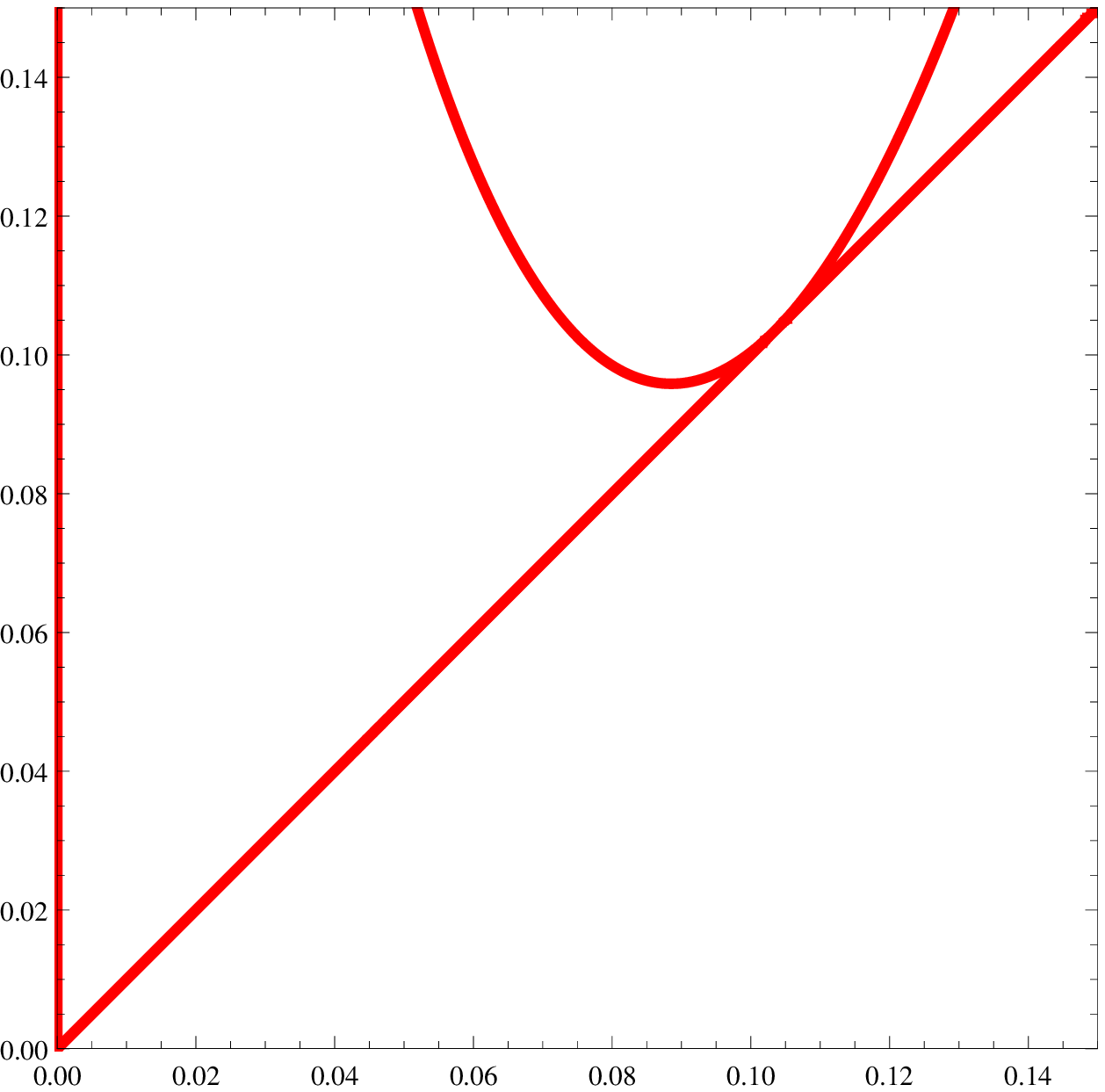}
\linethickness{0.3mm}
\multiput(5.2,64.3)(2,0){48}{\line(1,0){1}}
\put(82,57.5){\scalebox{0.8}{$\eta_2 = \tilde{\delta}$}}
\put(100,0){$\eta_1$}
\put(0,100){$\eta_2$}
\end{overpic}}
\caption{The $x$-axis stands for $\eta_1$, and the $y$-axis stands for
$\eta_2$}
\label{fig:delta}
\end{figure}

In order to determine $\delta_{\max}$, we have to consider two
additional cases: (1) $\eta_1 = 0$, (2) $\eta_1 = \eta_2 > 0$. They
correspond to the straight red lines in Figure \ref{fig:delta}. It can
be argued as below for these cases:
\begin{itemize}
\item For the case $\eta_1 = 0$, if $\eta_2 = 0$, the real
  diagonalizability of $\tilde{\bf M}_1$ has been proven. If $\eta_2 >
  0$, since \eqref{eq:q_M} always holds, we only need to consider
  whether the polynomial $q(\zeta)$ defined in \eqref{eq:q} has
  multiple roots. Figure \ref{fig:res_q} gives the plots of
  $\mathrm{res}(q, q')$ for $\eta_2 \in [0, 0.1]$, where $q'(\zeta) =
  \odd{\zeta} q(\zeta)$.  It is found that if $0 < \eta_2 <
  \tilde{\delta} < 0.1$, then $\mathrm{res}(q, q') > 0$, thus
  $q(\zeta)$ has no multiple roots. Then $\tilde{\bf M}_1$ is real
  diagonalizable.
\begin{figure}[!ht]
\centering
\begin{overpic}[width=.6\textwidth]{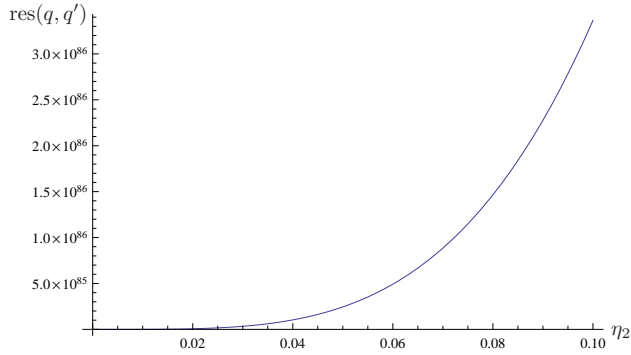}
\put(101,3){\scalebox{0.8}{$\eta_2$}}
\put(-4,58){\scalebox{0.8}{$\mathrm{res}(q,q')$}}
\end{overpic}
\caption{Plots of $\mathrm{res}(q, q')$ in the case of $\eta_1 = 0$}
\label{fig:res_q}
\end{figure}
\item For the case $\eta_1 = \eta_2 > 0$, we have to study the
  multiplicities of the roots of \eqref{eq:p_2}. Denote the polynomial
  \eqref{eq:p_2} by $\tilde{q}$, and let $\tilde{q}'(\zeta) =
  \odd{\zeta} \tilde{q}(\zeta)$. The values of
  $\mathrm{res}(\tilde{q}, \tilde{q}')$ for $\eta_2 \in [0, 0.1]$ are
  given in Figure \ref {fig:res_tilde_q}. It can also be observed that
  when $0 < \eta_1 = \eta_2 < \tilde{\delta} < 0.1$,
  $\mathrm{res}(\tilde{q}, \tilde{q}')$ is greater than zero, which
  results in the real diagonalizability of $\tilde{\bf M}_1$.
\begin{figure}[!ht]
\centering
\begin{overpic}[width=.6\textwidth]{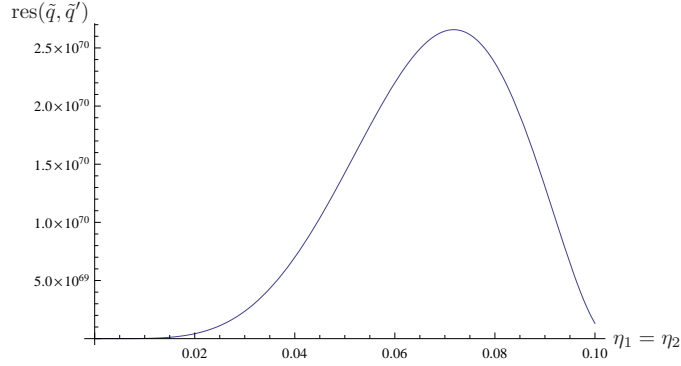}
\put(101,3){\scalebox{0.8}{$\eta_1 = \eta_2$}}
\put(-4,60){\scalebox{0.8}{$\mathrm{res}(\tilde{q},\tilde{q}')$}}
\end{overpic}
\caption{Plots of $\mathrm{res}(\tilde{q}, \tilde{q}')$ in the case of
$\eta_1 = \eta_2$}
\label{fig:res_tilde_q}
\end{figure}
\end{itemize}

As a summary, we claim that if $0 \leqslant \eta_1 \leqslant \eta_2 <
\tilde{\delta}$, the moment system \eqref{eq:new_13m} is hyperbolic.
Thus $\delta_{\max} = \tilde{\delta} \approx 0.095$.

In order to give a more precise description of the size of the
hyperbolicity region, we apply the Chapman-Enskog method to the
modified 13-moment system \eqref{eq:new_13m_mat}. Apply the
transformation $t' = \varepsilon t$ and $\bx' = \varepsilon \bx$ to
\eqref{eq:new_13m_mat}, and then the moment system becomes
\begin{equation} \label{eq:scaled_13m}
\pd{\tilde{\bw}}{t'} +
  \tilde{\bf M}_k(\tilde{\bw})
    \pd{\tilde{\bw}}{x_k'} =
  \frac{1}{\varepsilon} \tilde{\bQ}(\tilde{\bw}).
\end{equation}
For small $\varepsilon$, we formally expand $\tilde{\bw}$ as
\begin{displaymath}
\tilde{\bw} = \tilde{\bw}^{(0)} + \varepsilon \tilde{\bw}^{(1)} +
  \varepsilon^2 \tilde{\bw}^{(2)} + \cdots.
\end{displaymath}
The Chapman-Enskog expansion fixes the leading order term
$\tilde{\bw}^{(0)}$ to be the equilibrium part of $\tilde{\bw}$:
\begin{equation} \label{eq:exp_mnt}
\rho = \rho^{(0)}, \quad \bu = \bu^{(0)}, \quad
{\bf \Theta} = \theta {\bf I} + \varepsilon {\bf \Theta}^{(1)}
  + \varepsilon^2 {\bf \Theta}^{(2)} + \cdots, \quad
\bs = \varepsilon \bs^{(1)} + \varepsilon^2 \bs^{(2)} + \cdots.
\end{equation}
Substituting \eqref{eq:exp_mnt} into \eqref{eq:scaled_13m} and
balancing the zeroth order terms on both sides of
\eqref{eq:scaled_13m}, one may conclude
\begin{displaymath}
\theta_{ij}^{(1)} = -\frac{2\mu}{\rho} \left(
  \pd{v_{(i}}{x_{j)}'} - \frac{1}{3} \pd{v_k}{x_k'}
\right), \quad s_j^{(1)} = -\frac{15\mu}{4\theta} \pd{\theta}{x_i'}.
\end{displaymath}
These are equivalent to the well-known Navier-Stokes and Fourier laws.

According to the expansion \eqref{eq:exp_mnt}, we have
\begin{displaymath}
{\bf \Theta}^{-1} = \theta^{-1} {\bf I} -
  \frac{\varepsilon}{\theta^2} {\bf \Theta}^{(1)} + O(\varepsilon^2),
\end{displaymath}
and thus
\begin{equation} \label{eq:rho_sts}
\rho^{-2} \bs^T {\bf \Theta}^{-1} \bs =
  \varepsilon^2 \rho^{-2} \theta^{-1} |\bs^{(1)}|^2 + O(\varepsilon^3)
  = \frac{225}{16} \varepsilon^2 \mu^2
    \rho^{-2}\theta^{-3} |\nabla_{\bx'} \theta|^2 + O(\varepsilon^3).
\end{equation}
For Maxwellian molecules, the viscosity $\mu$ can be related to the
mean free path $l_{\mathrm{mfp}}$ by
\begin{displaymath}
\mu = \rho~ l_{\mathrm{mfp}} \sqrt{\frac{\pi \theta }{2}} .
\end{displaymath}
Thus, \eqref{eq:rho_sts} is simplified as
\begin{displaymath}
\rho^{-2} \bs^T {\bf \Theta}^{-1} \bs =
  \frac{225\pi}{32} \varepsilon^2 \left(
    \frac{|\nabla_{\bx'} \theta|}{\theta} l_{\mathrm{mfp}}
  \right)^2 + O(\varepsilon^3).
\end{displaymath}
Neglecting the high order terms, we get
\begin{equation} \label{eq:criterion}
\rho^{-2} \bs^T {\bf \Theta}^{-1} \bs \approx
  \frac{225\pi}{32} \varepsilon^2 \left(
    \frac{|\nabla_{\bx'} \theta|}{\theta} l_{\mathrm{mfp}}
  \right)^2 =
  \frac{225\pi}{32} \left(
    \frac{|\nabla_{\bx} \theta|}{\theta} l_{\mathrm{mfp}}
  \right)^2.
\end{equation}
Thus the hyperbolicity condition $\rho^{-2} \bs^T {\bf \Theta}^{-1}
\bs < \delta_{\max}$ is approximately given as
\begin{displaymath}
|\nabla_{\bx} \theta| < C_{\mathrm{hyp}} \theta / l_{\mathrm{mfp}},
  \quad C_{\mathrm{hyp}} = \sqrt{\frac{32\delta_{\max}}{225\pi}}.
\end{displaymath}
Since $\delta_{\max} \approx 0.095$, we have that $C_{\mathrm{hyp}}
\approx 0.065$. Thus the temperature is allow to change around $6.5\%$
of its value in one mean free path in order to ensure the
hyperbolicity. Consider the symmetric plane Couette flow problem. The
Navier-Stokes equations together with the first-order slip boundary
condition is valid only for $l_{\mathrm{mfp}} \leqslant 0.1 L$, where
$L$ is the distance between plates \cite{Karniadakis}. For
$\mathit{Kn} = l_{\mathrm{mfp}} / L$, in order to satisfy the
criterion \eqref{eq:criterion}, the ratio of the temperature in the
middle of the two plates to the temperature on each plate must not
exceed $\mathit{Kn}^{-1} C_{\mathrm{hyp}}$. The numerical results in
\cite{Mieussens2004} show that such a criterion is satisfied even for
very fast plate velocities.


\section{Conclusion} \label{sec:conclusion}

We find that for Grad's 13-moment system, the equilibrium is always on
the boundary of its hyperbolicity region. A modified 13-moment system
is proposed so that the local hyperbolicity around the equilibrium
states can be achieved. The derivation of this new model is almost the
same as the original one, except that the basis functions used in the
expansions of the distribution functions are different. Obviously,
this new model is far away from perfection; most of the classical
criticism on Grad's 13-moment system still applies to this new model.
However, due to the similarity of these two systems, the techniques
developed for Grad's 13-moment system may also apply to this new
model. This modified system enriches the 13-moment family, and some
interesting aspects are found for this new member.


\bibliographystyle{plain}
\bibliography{../article}

\medskip
Received xxxx 20xx; revised xxxx 20xx.
\medskip

\end{document}